\theoremstyle{plain}
\newtheorem{theorem}{Theorem}[section]
\newtheorem{proposition}[theorem]{Proposition}
\newtheorem{lemma}[theorem]{Lemma}
\newtheorem{definition}[theorem]{Definition}
\newtheorem{remark}[theorem]{Remark}
\numberwithin{equation}{section}
\let\oldmarginpar\marginpar
\renewcommand\marginpar[1]{\- \oldmarginpar[\raggedleft\footnotesize #1]%
{\raggedright\footnotesize #1}}
\newtheoremstyle{mainthm}
  {3pt}{3pt}
  {\itshape}
  {}
  {\bfseries}
  {:}
  {1em}
  {}
\theoremstyle{mainthm}
\newtheorem*{maintheorem}{Main Theorem} 
\theoremstyle{plain} 
\newcommand \bse {\begin{subequations}}
\newcommand \ese {\end{subequations}}
\newcommand \bei {\begin{itemize}}
\newcommand \eei {\end{itemize}}
\newcommand \Rstar {R_\star}
\newcommand \Vstar {V_\star}
\newcommand \Wstar {W_\star}
\newcommand \be         {\begin{equation}}
\newcommand \bel {\be\label}
\newcommand \eps \epsilon
\newcommand \coeff \kappa
\newcommand \Mcal {\mathcal M}
\newcommand \hb {\overline h}
\newcommand \lb {\overline l}
\newcommand \gb {\overline g}
\newcommand \Rt {\widetilde R}
\newcommand \gt {\widetilde g}
\newcommand \Gt {\widetilde G}
\newcommand \del \partial
\newcommand \nablat {\widetilde{\nabla}}
\newcommand \Ric {\text{Ric}}
\newcommand \Acal   {\mathcal A}
\newcommand \RR         {\mathbb R}
\newcommand \ee         {\end{equation}}
\newcommand \la \langle
\newcommand \ra \rangle
\newcommand \rhoh {\widehat{\rho}}
\begin{document}

\title{ 	Characteristic First-Order Structure of  f(R) Gravity
\\
 in Spherical Symmetry}

\author{Philippe G. LeFloch\footnote{
Laboratoire Jacques-Louis Lions \& Centre National de la Recherche Scientifique,
Sorbonne Universit\'e, 4 Place Jussieu, 75252 Paris, France.
Email: \emph{contact@philippelefloch.org}
\newline $^\dag$ Centre of Mathematical Analysis, Geometry and Dynamical Systems, Instituto Superior T\'ecnico,  Universidade de Lisboa, Av. Rovisco Pais, 1049--001 Lisbon, Portugal. Email: \emph{filipecmena@tecnico.ulisboa.pt}
\newline
$^\flat$ Centre of Mathematics, Universidade do Minho, Campus de Gualtar, 4710-057 Braga, Portugal.
\newline
\noindent\textbf{2020 \emph{Math. Subject Class.}}
Primary: 83C05, 83C20;
Secondary: 35Q75, 83C57, 35R09, 58J45.
\newline
\noindent\textbf{\emph{Key words and phrases.}} $f(R)$ gravity; spherical symmetry; Bondi--Sachs coordinates; characteristic initial value problem; first-order reduction; nonlocal integro-differential system; Hawking mass; monotonicity. 
} 
\hskip.14cm
and 
Filipe C. Mena$^\dag$$^\flat$ }

\date{}

\maketitle
 

\begin{abstract} 
We develop { {an augmented characteristic, first-order formulation of}} the field equations in \(f(R)\) gravity governing the { {global}} evolution of a (possibly) massive scalar field $\phi$ under spherical symmetry.
{  This formulation is designed to isolate the genuine dynamical degrees of freedom while preserving the geometric structure of the theory.}
By treating the spacetime scalar curvature  
as an independent unknown, we obtain a \emph{closed} first-order nonlocal system for the pair $(\phi,R)$.
{  This augmentation eliminates the higher-derivative character of the original equations at the level of the principal part.}
Our formulation allows us to pose the characteristic initial value problem and to establish several { {structural}} properties of solutions. More precisely, we work in generalized Bondi--Sachs coordinates and prescribe initial data on an {  asymptotically flat}, future light cone with vertex at the center of symmetry, and we identify the { {minimal}} regularity conditions required at the center.
{  These regularity conditions are shown to be precisely those ensuring equivalence between the reduced system and the full $f(R)$ equations.}
Extending Christodoulou's method for the Einstein--scalar-field system, we recast the \(f(R)\) field equations as an integro-differential system of two coupled, first-order, nonlocal, nonlinear hyperbolic equations,
whose principal unknowns are the scalar field and the spacetime scalar curvature.
In deriving this reduced two-equation system, we impose natural assumptions on the scalar-field potential and on the function $f(R)$ governing the gravitational Lagrangian density.
{  These hypotheses correspond to standard viability and positivity conditions commonly imposed in the $f(R)$ literature.}
As an application, we prove several equivalence and monotonicity properties, including for the 
Hawking mass in this setting. 
The proposed formulation { {separates the essential null evolution from the radial constraint reconstruction} (via explicit integral relations)} on the future domain of dependence of the initial light cone.
{  This structure makes the system directly amenable to characteristic energy estimates and to stable numerical implementation in spherical symmetry.}
 \end{abstract}

 
 \newpage 
  
{
 
\setcounter{secnumdepth}{2} 
\setcounter{tocdepth}{2}
\tableofcontents

} 

\clearpage 

\section{Introduction}
\label{section---1}

\subsection{Purpose of this paper}

\paragraph{Theories of modified gravity.}

In recent years, new observational data suggest that extensions of Einstein's field equations may be relevant for explaining the accelerated expansion of the Universe and certain galactic-scale instabilities. Many proposed theories of modified gravity nevertheless exhibit undesirable features, including
 { {a}} lack of strong hyperbolicity in commonly used gauges, { {possible}} non-uniqueness of solutions, { {or}} changes in the character of the evolution equations that may alter the system's causal structure. The development of robust numerical simulations in nonlinear, physically relevant regimes is hampered by the lack of mathematically rigorous formulations. 
Indeed, only a limited number of works address the formulation and well-posedness of these theories; see \cite{Avalos,Lehner2,Lehner,Cocke,Hilditch1,Felice,Figueras,LeFlochMa17a,Reall,Reall2,Pretorius,Salgado}. 

{  More recently, alternative, well-posed theories have been proposed in which the field equations are constructed so as to admit a strongly hyperbolic first-order reduction, thereby ensuring a well-defined causal structure and making nonlinear evolution amenable to both analytical and numerical investigation; cf.~\cite{Brady-Figueras,Figueras} (and the references therein), as well as \cite{LeFlochMa17a}--\cite{LeFlochMa26} for the near-Minkowski regime (without symmetry restriction).
 }

Among the various extensions of general relativity, $f(R)$ gravity is widely regarded as a natural and physically viable alternative to Einstein's theory \cite{Felice,Ferreira}. The $f(R)$ field equations are considerably more involved than the Einstein equations: in addition to the second-order Ricci curvature terms, they contain \emph{fourth-order} metric derivatives, {  in particular through second derivatives of the scalar curvature}. This partly explains why, despite their physical importance, rigorous mathematical results to date encompass only local well-posedness~\cite{Felice,LeFlochMa17a} and global nonlinear stability in the near-Minkowski regime \cite{LeFlochMa23}. { {We refer to the standard reviews for additional physical background and for the scalar--tensor (Einstein-frame) viewpoint; see, for instance, \cite{SotiriouFaraoni,Felice}.}}

{  The present work is primarily concerned with a mathematically robust characteristic formulation in spherical symmetry; we therefore restrict ourselves to structural assumptions ensuring well-defined null evolution and geometric control, while deferring model calibration and phenomenological comparisons to future studies.}


\paragraph{Evolution in spherical symmetry.}

In this paper, we initiate the study of the global evolution of a (possibly massive) scalar field in $f(R)$ gravity { {in}} spherical symmetry. Christodoulou \cite{Chr,Chr2,Chr3} developed, within general relativity, a framework to analyze a massless scalar field in spherical symmetry using Bondi-type coordinates. This framework was also employed to study black-hole formation by numerical methods; cf.~\cite{Goldwirth-Piran} and the references therein. {  
In that context, a first-order characteristic formulation was crucial for separating constraint equations from true evolution equations along outgoing null directions. This structure allows one to evolve the dynamical degrees of freedom by solving transport-type equations along characteristics, while reconstructing the remaining metric coefficients by radial integration. Such a separation is essential both for establishing well-posedness and for implementing stable numerical schemes.}

{  These developments inspired further investigations of gravitational collapse, in particular because the characteristic first-order formulation makes the causal propagation of matter fields transparent and provides direct access to geometric quantities such as trapped surfaces and the Hawking mass along null hypersurfaces.}
We refer to~\cite{Brady-etal,Hilditch2,Zhang-Lu,Michel-Moss}, as well as additional mathematical results; cf.~\cite{Chae,Costa-Mena,Costa-Duarte-Mena} (classical solutions) and \cite{LeFloch-Mena} (generalized solutions). Building on Christodoulou's subsequent work using a double-null foliation~\cite{Chr-bv,Chr-naked}, 
{  many studies have also addressed gravitational collapse toward black holes in spherical symmetry, largely exploiting the fact that a first-order null formulation allows one to track horizon formation and mass monotonicity directly at the level of the evolution system.}
Next, with the growing interest in modified gravity, subsequent works include \cite{Quo-Joshi, Zhang-etal-2016, Chow}, which mostly consider quadratic curvature corrections to the Einstein--Hilbert action. Despite these advances, rigorous mathematical results on the \emph{global geometry} of Cauchy developments in $f(R)$ gravity are still lacking. 
{  A central difficulty is that the fourth-order character of the $f(R)$ equations obscures the underlying hyperbolic structure and complicates the identification of a minimal set of evolution variables. Without a suitable first-order reformulation, both analytical energy estimates and characteristic numerical evolution become significantly more involved.}

{  We use a (possibly massive) scalar field as a standard test matter model in spherical symmetry: it is flexible enough to include the massless Christodoulou regime while allowing physically motivated self-interactions, and our analysis relies only on mild sign/defocusing assumptions on $U$.}


\paragraph{Aim of this paper.}
	
Our aim is to advance the mathematical analysis of $f(R)$ gravity by building on the work of Christodoulou \cite{Chr,Chr2,Chr3}. 
We also { {draw on}} the formulation proposed by LeFloch and Ma \cite{LeFlochMa17a,LeFlochMa23} for the study of the equations of $f(R)$ gravity in the near-Minkowski regime. In the present paper, we focus on spacetimes containing a (possibly massive) scalar field evolving in spherical symmetry and, drawing on Christodoulou's insights, we provide a framework suited to addressing the global causal structure of such spacetimes. 

{  The key structural step is the construction of a genuine first-order characteristic system in which the higher-derivative terms of $f(R)$ gravity are absorbed into an augmented variable. This reduction makes the principal part explicitly hyperbolic along null directions, isolates the true dynamical degrees of freedom, and renders the constraint propagation transparent.}

Our companion paper \cite{LeFloch-Mena} considered the evolution of a self-gravitating \emph{massive scalar field} and provided a first step toward understanding the role of a massive scalar field in the \emph{global} spacetime geometry. While in \cite{LeFloch-Mena} we considered generalized solutions, in the present paper we consider solutions of class $C^1$ that satisfy regularity conditions at the center and are {  asymptotically flat}. Following Christodoulou, we use a generalization of the Bondi--Sachs coordinates and formulate the characteristic initial value problem with data imposed on a future light cone.
{  These coordinates are the natural setting for global null evolution in spherical symmetry and are precisely the framework in which one can separate null propagation from radial constraint reconstruction.}

The $f(R)$ system is significantly more involved than the Einstein system, and a key structure is uncovered here by introducing an augmented formulation in which the metric and its scalar curvature are regarded as independent unknowns. We thus obtain an integro-differential system of two coupled, first-order, nonlinear hyperbolic equations whose principal unknowns are the spacetime scalar curvature and, in this case, the scalar field. We then prove consistency and regularity properties for this system and its solutions, as well as monotonicity properties of the Hawking mass in the $f(R)$ setting. In the process, we formally recover the general relativity limit (when ${f(R) \to R}$), as well as the special case of a massless scalar field. 

{ {To better highlight the novelty and the purpose of our approach, we emphasize the following point. While Bondi-type coordinates and conformal variables are classical tools, their combination in the $f(R)$ setting does \emph{not} automatically yield a usable first-order characteristic reduction, due to the higher-derivative terms and the need to control the sign of the metric coefficient recovered from the radial constraints. Our main contribution is an \emph{augmented characteristic gauge} in which the scalar curvature---or, equivalently, the conformal variable $\rho=\kappa^{-1}\log f'(R)$ under the standard viability condition $f'(R)>0$---is treated as an independent unknown. { Here, $\kappa>0$ is a normalization (bookkeeping) parameter in the definition of $\rho$; in applications one may consider a family $f_\kappa$ such that $f_\kappa(R)\to R$ (equivalently $f_\kappa'(R)\to 1$, hence $\rho\to 0$) in the Einstein--Hilbert limit. (Cf.~also \eqref{equakdkdj}, below.) We denote by $\phi$ the scalar field under consideration.}

This choice closes the evolution at first order and leads to a \emph{closed, first-order, nonlocal} system for the pair $(\phi,R)$ (equivalently $(\phi,\rho)$), which we prove to be \emph{equivalent to the full $f(R)$ field equations} within a mild $C^1$ regularity class at the center. The resulting formulation separates the essential null evolution from the radial constraint reconstruction (via explicit integral relations), is directly compatible with characteristic energy estimates, and is designed to provide a robust starting point for characteristic numerical evolution in spherical symmetry.}}
{  In particular, the first-order structure obtained here is tailored to permit characteristic energy estimates and to ensure that no hidden higher-order derivatives reappear in the evolution subsystem, a property that is essential for both stability analysis and numerical implementation.}

{  For background on characteristic evolution and its numerical implementation, we refer to the review \cite{Winicour}. We do not pursue observational fitting here; rather, we aim to provide a formulation suitable for subsequent numerical and phenomenological studies once a specific $f(R)$ model and parameter regime are selected. In particular, the Hawking mass monotonicity derived below provides a built-in geometric diagnostic along the evolution.}


\subsection{A first-order formulation of $f(R)$ gravity}

\paragraph{Action of $f(R)$ modified gravity.}

Recall first that Einstein's theory is based on the Einstein--Hilbert action
\be
\Acal_{\text{EH}}[\phi, g] := \int_{\Mcal} \Big( {R_g  \over 16 \pi} + L[\phi,g] \Big) \, dV_g,
\ee
associated with a $(1+3)$--dimensional spacetime $({\Mcal},g)$ with signature $(-, +,+,+)$.
Hence, the functional $\Acal_{\text{EH}}[\phi, g] $ is determined by the scalar curvature $R_g$ of the metric $g$, which represents the geometry of spacetime, and by a Lagrangian $L[\phi,g]$ describing the matter content and represented by certain fields $\phi$ defined on $\Mcal$. Here, $dV_g$ denotes the canonical volume form associated with $g$.
As is well-known, the action $\Acal_{\text{EH}}[\phi, g]$ is (formally) critical at metrics $g$
satisfying Einstein's field equations
\be
\label{Eq1-01}
G := \Ric - {R_g \over 2} \, g = 8 \pi \, T[\phi,g],
\ee
in which the right-hand side 
\bel{Eq:12}
T_{\alpha\beta}[\phi,g] := -2 \, {\delta L \over \delta g^{\alpha\beta}} [\phi,g]  + g_{\alpha\beta}\, L[\phi,g]
\ee
is the stress-energy tensor of the matter system { {(Greek indices $\alpha, \beta,\dots=0,\ldots,3$ denoting spacetime components)}}.
Our emphasis in this paper is on massive scalar fields $\phi$ described by 
\be
\label{stress-energy}
T_{\alpha\beta} = \nabla_\alpha \phi \nabla_\beta \phi- \Big( {1 \over 2} \nabla^\gamma \phi \nabla_\gamma \phi + U(\phi) \Big) g_{\alpha\beta}, 
\ee
in which the potential function is $U=U(\phi)$, so that the field $\phi$ satisfies the Klein--Gordon equation 
\be
\Box_g \phi = U'(\phi).
\end{equation}
We study an extension of Einstein's theory, defined as follows. A function $f: \RR \to \RR$ being given, we consider the modified gravity action for the $f(R)$ theory
\be
\Acal_{\text{MG}}[\phi,g] =: \int_{\Mcal} \Big( {f(R_g) \over 16 \pi} + L[\phi, g]\Big) \, dV_g,
\ee
whose critical points satisfy the following \emph{field equations of modified gravity}
\bel{Eq1-14}
E_{\alpha\beta} :=
f'(R_g) \, G_{\alpha\beta} - \frac{1}{2} \Big( f(R_g) - R_g f'(R_g) \Big) g_{\alpha\beta}
+  \big( g_{\alpha\beta} \, \Box_g   - \nabla_\alpha \nabla_\beta \big)  \big( f'(R_g) \big) 
= 8 \pi \, T_{\alpha\beta}[\phi,g]. 
\ee
The right-hand side is still given by \eqref{Eq:12}. The modified gravity tensor $E_{ab}$ replaces
the Einstein tensor $G_{ab}$ and satisfies $\nabla^a E_{ab} = 0$, 
so that $T_{ab}$ is also divergence-free. 


\paragraph{Main result.}

In local coordinates, \eqref{Eq1-14} consists of a nonlinear system of fourth-order partial differential equations, while Einstein's theory in \eqref{Eq1-01} leads to second-order equations. We investigate
how to incorporate the effect of these higher-derivative terms into techniques developed earlier for the Einstein equations.
The condition $f'(R_g) >0$ in \eqref{hypo-Einstein}, below, is fundamental throughout the $f(R)$ theory and, in particular, it allows us to introduce the \emph{conformal metric} $\gt$ and the \emph{augmented variable} $\rho$ (in \eqref{equa-233}, below):  
\bel{equakdkdj}
\gt_{\alpha\beta} := e^{\kappa \rho} g_{\alpha\beta}, 
\qquad 
\rho := {1 \over \kappa} \log f'(R_g). 
\ee 
{  Recall that $\kappa>0$ is a normalization (bookkeeping) parameter in the definition of $\rho$; in applications one may consider a family $f_\kappa$ such that $f_\kappa(R)\to R$ (equivalently $f_\kappa'(R)\to 1$, hence $\rho\to 0$) in the Einstein--Hilbert limit.} 

{  Under the hypotheses below and within a mild $C^{1}$ regularity class at the center, we derive an \emph{augmented characteristic gauge} in which the full $f(R)$--scalar-field system reduces to a \emph{closed, first-order, nonlocal} integro-differential evolution for the pair $(\phi,R_g)$ (equivalently $(\phi,\rho)$). The remaining metric coefficients are then recovered by radial integration of constraint equations, in the spirit of Christodoulou's characteristic formulation for the Einstein--massless scalar field system.}

We summarize our results as follows. 

\begin{maintheorem}[Structure of $f(R)$ gravity in spherical symmetry]
\label{main-theo}
Consider the field equations in $f(R)$ gravity \eqref{Eq1-14}, coupled to a 
possibly massive, real-valued scalar field $\phi$. Suppose that the defining function $f= f(R)$ and the matter potential $U=U(\phi)$ satisfy the following conditions: 
\bel{hypo-Einstein}
\aligned 
& (1) 
\, &  \phi \, U'(\phi) &\geq 0, 
\\
& (2) 
\, &  U(\phi) &\geq 0,   
\\
& (3) 
\, & f'(R) &>0, 
\\
& (4)  
\, &  
f(R) &\leq R \, f'(R).  
\endaligned
\ee 
Then, the field equations of $f(R)$ gravity for spherically symmetric spacetimes can be reduced to an \emph{integro-differential system} consisting of two first-order coupled, nonlinear hyperbolic equations, stated in Proposition~\ref{main-propo-first-order}, below. 
Indeed, this system is equivalent to the full { {$f(R)$ field equations}} in the class of $C^1$ solutions that are suitably regular at the center in the sense of Definition~\ref{definition}, below. Furthermore, the Hawking mass is future non-decreasing along radial directions and non-increasing along null directions. In the \emph{formal} limit ${f(R) \to R}$ and 
$U(\phi)\to 0$, one recovers Christodoulou's formulation of the Einstein--massless scalar field system. 
\end{maintheorem}

\vskip.15cm

\paragraph{Comments on the results.}

We point out that our assumptions are quite natural since they ensure positivity and monotonicity properties that also arise in the massless case. The condition $\phi \, U'(\phi) \geq 0$ is imposed since it guarantees that the Klein--Gordon energy is \emph{defocusing}, so that the forward evolution will not be limited by the matter model. 
The conditions ${U(\phi) \geq 0}$ and ${f(R) \leq R \, f'(R)}$ are required to prove that the Hawking mass is non-negative. 

{  These conditions are not chosen to privilege a specific phenomenological model; rather, they are \emph{structural hypotheses} ensuring that (i) the characteristic reduction closes at first order, (ii) the metric reconstruction remains globally meaningful, and (iii) the Hawking-mass mechanism yields the required sign/monotonicity properties.}

{ {To further motivate our structural conditions  \eqref{hypo-Einstein}, we record here  the following remarks.}
\begin{itemize}
\item { {The condition $f'(R)>0$ is standard in $f(R)$ gravity. It ensures that the effective gravitational coupling does not change sign and, more fundamentally, it is precisely the condition under which the model is dynamically equivalent to a scalar--tensor theory in the Einstein frame. This equivalence motivates the introduction of the conformal metric and of the variable $\rho=\kappa^{-1}\log f'(R)$; see the review \cite{SotiriouFaraoni}.}}

\item { {The inequality $R f'(R)-f(R)\ge 0$ 
is used in our characteristic formulation to control the sign of the integrand in the radial constraint that reconstructs $e^{\nu-\lambda}$, and therefore to prevent breakdown of the characteristic gauge. It is also the condition that yields the non-negativity of the Hawking mass, a central geometric quantity in spherical symmetry.}}

\item { {The conditions $U(\phi)\ge 0$ and $\phi\,U'(\phi)\ge 0$ are standard defocusing/positivity assumptions for the scalar field. They ensure that the matter model does not introduce negative-energy mechanisms obstructing global control and allow us to recover the familiar monotonicity properties from the Einstein--(massless) scalar-field setting.}}
\end{itemize}
}

{ {It is also useful to relate the present characteristic formulation to the standard scalar--tensor (Einstein-frame) representation of $f(R)$ gravity, valid under $f'(R)>0$, in which the additional degree of freedom can be interpreted as a scalar field coupled to Einstein gravity; cf.\ the reviews \cite{SotiriouFaraoni,Felice}. Within this viewpoint, treating $R$ (equivalently $\rho$) as an independent variable is precisely what allows one to close the characteristic reduction at first order while retaining a formulation directly adapted to null evolution and to geometric monotonicity arguments. Finally, our hypotheses encompass concrete models used in cosmology and gravitational physics (in suitable curvature/parameter regimes), including the Starobinsky model $f(R)=R+\alpha R^{2}$ \cite{Starobinsky80} and the Hu--Sawicki model \cite{HuSawicki}.}}

\begin{remark} 
  
Assumptions (3)--(4) are consistent with the structural conditions under which Hawking-type horizon topology theorems extend to $f(R)$ gravity. 
Under the viability conditions $f'(R)>0$ and $f''(R)\ge 0$, the theory is dynamically equivalent to a scalar--tensor system satisfying an effective null energy condition. 
In this setting, cross-sections of stationary event horizons in four spacetime dimensions must have spherical topology; see, for instance, Sec.~2 of \cite{Namp}. 
Thus, the structural hypotheses adopted here exclude exotic topological black holes and are consistent with astrophysical expectations.
\end{remark}

\subsection{Outline of this paper}

In Section~\ref{section---2}, we introduce Bondi coordinates and express the field equations of $f(R)$ gravity for a scalar field. We then identify the essential equations, which imply the full set of $f(R)$ field equations, provided a mild regularity condition is assumed at the center. In Section~\ref{section---A2}, we analyze the regularity at the center. In Section~\ref{section---3}, we introduce a first-order formulation and establish that it is equivalent to the full set of $f(R)$ equations. Finally, in Section~\ref{section--- 4} we study of the Hawking mass and establish monotonicity properties. We summarize our conclusions and indicate perspectives in the final Section~\ref{sec:conclusions}.


\section{Field equations of $f(R)$ gravity in Bondi coordinates}
\label{section---2}

\subsection{Spacetime metric}
\label{defined metric}

{ {Our first task is to present the equations of $f(R)$ gravity when spherical symmetry is assumed. We use {
Greek indices $\alpha,\beta\ldots \in \{0,1,2,3\}$ and 
Latin indices $a,b,\ldots \in \{0,1,2,3\}$}
for coordinate and frame indices, respectively. }}
 The functions under consideration are assumed to be sufficiently regular \emph{away} from the center of symmetry, so that all first-order derivatives\footnote{$C^1$ regularity is sufficient, provided second-order derivatives are understood in the distributional sense.} are defined in the classical sense. The regularity at the center will be discussed in the course of our analysis. 

We consider spherically symmetric spacetimes denoted by $(\Mcal, g)$ and satisfying the equations of $f(R)$-gravity. Following the physical and mathematical literature (cf.~the review given in~\cite{LeFlochMa17a,LeFlochMa23} and the references therein), we work with the \emph{conformal metric}
\bel{equa-233} 
\gt_{\alpha\beta} := e^{\kappa \rho} g_{\alpha\beta}, 
\qquad 
\rho := {1 \over \kappa} \log f'(R) 
\ee
rather than with the physical metric $g$. { {In the present section we work in a regime in which $f'(R)>0$ and the map $R\mapsto \kappa^{-1}\log f'(R)$ is (locally) invertible, so that $\rho$ can be used as an independent variable.}} 
Here, $\kappa$ is a positive parameter such that 
${f(R) \to R}$ when ${\kappa \to 0}$. 
Therefore, it is the metric $\gt$ that we express in the Bondi form advocated by Christodoulou \cite{Chr}.  We denote by $u \geq 0$ a (future) null coordinate and $r>0$ the radial variable associated with the area radius of the orbits of symmetry, while $g_{S^2} = {d\theta^2 + \sin^2 \theta d\varphi^2}$ denotes the canonical metric on the standard $2$-sphere $S^2$. Hence, in these generalized Bondi coordinates $(u,r)$ we write 
\bel{eq:metricrad}
\gt = -e^{2\nu}du^2 - 2e^{\nu+\lambda} dudr + r^2 g_{S^2},
\ee
where $\nu= \nu(u,r)$ and $\lambda= \lambda(u,r)$ depend only on the variables $(u,r)$. 

In view of \eqref{eq:metricrad}, we define a null frame $(e_0, e_1, e_2, e_3) =(n,l,\zeta_1,\zeta_2)$  by setting 
\bse
\bel{frame1}
\aligned
& e_0 = n := e^{- \nu}\del_u - \frac{1}{2}e^{- \lambda}\del_r,
\qquad
&& e_1 = l := e^{- \lambda}\del_r,
\\
& e_2 = \zeta_1 := r^{-1}\del_\theta,
\qquad
&& e_3 = \zeta_2 := \frac{1}{r\sin\theta}\del_{\varphi}. 
\endaligned
\ee
By construction, $(\zeta_1,\zeta_2)$ is a (locally defined) orthonormal frame on the sphere $S^2$, while $n, l$ are null vectors satisfying
$\gt(n,l) = -1$. 
The corresponding co-frame denoted $(\omega^0, \omega^1, \omega^2, \omega^3)$ is easily computed to be 
\bel{coframe}
\aligned
& \omega^0 = e^{\nu}du,
\qquad
\omega^1 = e^{\lambda}dr + \frac{1}{2}e^{\nu}du,
\qquad
\omega^2 = rd\theta,
\qquad
\omega^3 = r\sin\theta d\varphi.
\endaligned
\ee 
\ese 
In addition, the contravariant form of the conformal metric reads
\bel{eq;492}
\gt^{\alpha\beta} = -n^\alpha l^\beta - l^\alpha n^\beta + \zeta_1^\alpha \zeta_1^\beta
+ \zeta_2^\alpha \zeta_2^\beta.
\ee 
Observe that the integral curves of the vector fields $n$ and $l$ represent incoming and outgoing light rays in the spacetime $(\Mcal, \gt)$, respectively. We will use the \emph{light ray operator}
\be
Dw := e^{\nu} n(w) = \del_uw - \frac{1}{2}e^{\nu- \lambda}\del_r w
\ee
and the \emph{wave operator} acting on a spherically symmetric function $w$, namely 
\bel{Bondi d'Alembert}
\aligned
\Box_{\gt} w 
& =- \frac{2}{re^{\nu+\lambda}}\Big(D(\del_r(rw))
-{r \over 2} \del_r(e^{\nu- \lambda})\del_rw\Big). 
\endaligned
\ee


\subsection{Derivation of the field equations of interest}

\paragraph{Frame components.}

It is convenient to introduce the inverse of the function $R \mapsto \frac{1}{\kappa}\ln f'(R)$,
which we denote by $\rho \mapsto \Rstar(\rho)$. {  Indeed, $R^*(\rho)$ is the inverse function satisfying}
\be
\label{fprime}
f'( R_\star(\rho) ) = e^{\kappa\rho}.
\end{equation}
We easily check that $\rho$ satisfies the wave equation
\bel{equa-Laplace-rho} 
\Box_{\gt}\rho = \frac{2}{\kappa}\Big(\Wstar(\rho) - \frac{4\pi}{3}e^{-2\kappa\rho}\big(\sigma+4 \, U(\phi)\big)\Big), 
\ee
where we introduced the \emph{first potential function} 
\be
\label{asym W_star}
\Wstar(\rho) := \frac{2f(\Rstar(\rho)) - \Rstar(\rho)e^{\kappa\rho}}{6e^{2\kappa\rho}}.
\end{equation}
The modified gravity tensor $E_{ab}$ defined in \eqref{Eq1-14}, in frame components, takes the form
\bel{eq fR components}
\aligned
& E_{nn} = \frac{e^{\kappa\rho}}{re^{\nu+\lambda}}\Big(\frac{1}{2}e^{\nu- \lambda}\del_r(\nu+\lambda) - 2\del_u\lambda\Big)
- \frac{3\kappa^{2}e^{\kappa\rho}}{2e^{2\nu}}(D\rho)^2,
\\
& E_{nl} = - \frac{e^{\kappa\rho}}{r^2e^{\nu+\lambda}}\big(\del_r(re^{\nu- \lambda})-e^{\nu+\lambda}\big) - \frac{\Vstar(\rho)}{2e^{\kappa\rho}},
\\
& E_{ll} = e^{\kappa\rho-2\lambda}\Big(\frac{2}{r}\del_r(\nu+\lambda) - \frac{3}{2}\kappa^{2}|\del_r\rho|^2\Big),
\\
& E_{\zeta_1\zeta_1} = E_{\zeta_2\zeta_2}=e^{\kappa\rho-2\lambda}\Big(\frac{1}{r}\del_r(\nu- \lambda)+\del_{rr}\nu+\del_r\nu(\del_r\nu- \del_r\lambda)-e^{- \nu+\lambda}\del_{ur}(\nu+\lambda)\Big)\\
& \quad\quad\quad\quad \quad \quad- \frac{3}{2}\kappa^2e^{\kappa\rho- \nu- \lambda}D\rho\del_r\rho+e^{- \kappa\rho}\Vstar(\rho),
\endaligned
\ee
where we have introduced the {\em second potential function} associated with $f$:
\bel{eq:pot} 
\Vstar(\rho) := {1 \over 2}  \Rstar(\rho) f'(\Rstar(\rho)) - {1 \over 2} f(\Rstar(\rho)).
 \ee 
On the other hand, the frame components of the stress-energy tensor \eqref{stress-energy} are 
\bel{eq Bondi components T}
\aligned 
& T_{nn} = e^{-2\nu}(D\phi)^2,
\hspace{3.8cm}
T_{ll} = e^{-2\lambda}(\del_r\phi)^2,
\\
& T_{\zeta_{1}\zeta_{1}} = T_{\zeta_{2}\zeta_{2}} = -e^{- \kappa\rho}\left(\frac{\sigma}{2}+U(\phi)\right),\qquad
 T_{nl}=T_{ln}=e^{- \kappa\rho}U(\phi),
\endaligned
\ee
in which the notation
\bel{trace-of-grad-phi} 
\sigma := \widetilde g^{\alpha\beta}\del_\alpha\phi\del_{\beta}\phi
= -2e^{- \nu- \lambda+\kappa\rho}\del_r\phi D\phi
\ee
stands for the (Lorentzian) norm of the scalar field gradient with respect to the conformal metric. In particular, the trace of the stress-energy tensor $T$ 
equals
\bel{traceT}
\text{tr}(T) = - \sigma- 4 \, U(\phi).
\ee


\paragraph{Field equations in Bondi coordinates.}

By combining the equations above, we are in a position to write down the field equations of $f(R)$ gravity coupled with a scalar field. 
Two of the essential field equations are associated with the components $(n,l)$ and $(l,l)$, that is,
\bel{eq:522}
\aligned
& E_{nl} = 8\pi e^{- \kappa\rho}U(\phi),
\qquad \quad  E_{ll} = 8\pi e^{-2\lambda}|\del_r\phi|^2,
\endaligned
\ee 
which are equivalent to the following first-order differential equations: 
\bel{equa-three-equa}
\aligned
\del_r(\nu+\lambda) & =  \frac{3}{4}\kappa^{2}r|\del_r\rho|^2 + 4\pi r e^{- \kappa\rho} |\del_r \phi|^2,
\\
\del_r\big(re^{\nu- \lambda}\big) & =  \Big(1 - r^{2}e^{-2\kappa\rho} \big( \Vstar(\rho) + 8\pi U(\phi) \big)\Big)e^{\nu+\lambda}. 
\endaligned
\ee
From the equation $\Box_g \phi = U'(\phi)$, { {or equivalently $e^{\kappa\rho}\Box_{\gt}\phi- \kappa\,\gt\!\left(\nablat\rho,\nablat\phi\right) =U'(\phi)$}},  
we find
\bel{Bondi scalar}
D \Big( \del_r(r\phi) \Big) = {r \over 2} \del_r(e^{\nu- \lambda})\del_r\phi +
\frac{\kappa}{2}r\big(D\rho \del_r\phi + D\phi \del_r\rho\big)-{r \over 2} e^{\nu+\lambda- \kappa\rho}U'(\phi).
\ee
Furthermore, we need an evolution equation for the scalar curvature $R= \Rstar(\rho)$. By taking the trace of \eqref{Eq1-14}, we deduce the scalar equation
\bel{trace}
\text{tr}(E) = f'(R)R-2f(R)+3\Box_{g}f'(R) =8\pi\text{tr}(T),
\ee
which yields (using~\eqref{Bondi d'Alembert}, \eqref{traceT}, and \eqref{block-g})
\bel{eq:599}
\aligned
D\big(\del_r(r\rho)\big)
= 
& \frac{e^{\nu+\lambda}}{2}\Big(1 - e^{-2\lambda} - r^2e^{-2\kappa\rho}\big(\Vstar(\rho)+8\pi U(\phi)\big)\Big)\del_r\rho 
\\
& - {8 \pi r\over 3 \kappa} \, e^{- \kappa\rho} \del_r\phi D\phi
- \frac{1}{\kappa}re^{\nu+\lambda}
\Big(
\Wstar(\rho) - \frac{16}{3}\pi e^{-2\kappa\rho}U(\phi) \Big).
\endaligned
\ee
Here, we have also used \eqref{trace-of-grad-phi} and \eqref{asym W_star}. 


\subsection{Essential field equations in Bondi coordinates} 

We will now show that, if regularity at the center is imposed, then the full set of equations  is equivalent to the system formed by the equations $E_{ll}=8\pi T_{ll}$, $E_{nl}=8\pi T_{nl}$, the trace of \eqref{Eq1-14} and the Klein-Gordon equation. 

\begin{proposition}[Essential field equations of $f(R)$ gravity]
\label{prop essential f(R) Bondi}
Consider the system of $f(R)$-modified gravity coupled to a scalar field $\phi$ and assume spherical symmetry. When the conformal metric $\gt$
is expressed in Bondi coordinates as stated in~\eqref{eq:metricrad}:
\begin{enumerate}

\item The field equations \eqref{Eq1-14}
are equivalent to
 
\bei

\item two differential equations for the metric coefficients $\nu,\lambda$, namely \eqref{equa-three-equa}, and 

\item a transport equation for the matter field $\phi$, namely \eqref{Bondi scalar}. 

\eei  
\noindent These equations are referred to as the \emph{essential field equations} of $f(R)$-modified gravity in spherical symmetry. { {(Together with the trace equation \eqref{eq:599}, they form a closed characteristic system once $\rho$ is treated as an independent variable.)}}

\item  Here,  the function $\rho = \frac{1}{\kappa}\ln f'(R)$ is a nonlinear function
of the scalar curvature $R$ of the physical metric $g$ and is determined from up to second-order derivatives of $\nu$ and $\lambda$.
In fact,  $\rho$ also satisfies the transport equation \eqref{eq:599}. 

\item  More precisely, if the spacetime satisfies the regularity condition at the center: 
\bel{equa-key-regular} 
\lim_{r \to 0} r^2 \, e^{-\kappa\rho} \, \big(E - 8\pi T\big)(D,D) = 0. 
\ee
\end{enumerate} 
If the essential equations \eqref{equa-three-equa}, \eqref{Bondi scalar}, and \eqref{eq:599} hold
(which includes the $(n,l)$ and $(l,l)$ components in the null frame), 
then the remaining equations corresponding to the components $(n,n)$ and $(\zeta_a, \zeta_a)$ are automatically satisfied.
\end{proposition}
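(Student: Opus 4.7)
The guiding principle is to exploit the twice-contracted Bianchi-type identity $\nabla^a E_{ab} \equiv 0$, a geometric consequence of the diffeomorphism invariance of the $f(R)$ action, together with $\nabla^a T_{ab} = 0$, which follows from the Klein-Gordon equation \eqref{Bondi scalar} (since $\nabla^a T_{ab} = (\Box_g \phi - U'(\phi))\nabla_b\phi$ for a scalar field). Introducing the discrepancy tensor $S_{ab} := E_{ab} - 8\pi T_{ab}$, the argument rests on the single contracted Bianchi identity
\be
\nabla^a S_{ab} = 0.
\ee
In the spherically symmetric null frame \eqref{frame1}, the only potentially non-vanishing components of $S$ are $S_{nn}$, $S_{nl}$, $S_{ll}$, and $S_{\zeta_1\zeta_1} = S_{\zeta_2\zeta_2}$. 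The essential equations \eqref{eq:522} supply $S_{nl} = 0$ and $S_{ll} = 0$, while the trace identity \eqref{trace} gives $\text{tr}_g(S) = 0$.

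First I would dispose of the angular component by a purely algebraic step. Since $g = e^{-\kappa\rho}\gt$, the frame components of the inverse metric satisfy $g^{ab} = e^{\kappa\rho}\gt^{ab}$, so that the relations $\gt(n,l) = -1$, $\gt(\zeta_a,\zeta_a) = 1$ yield
\be
0 = \text{tr}_g(S) = e^{\kappa\rho}\bigl(-2 S_{nl} + 2 S_{\zeta_1\zeta_1}\bigr),
\ee
and, since $S_{nl} = 0$, I conclude $S_{\zeta_1\zeta_1} = S_{\zeta_2\zeta_2} = 0$.

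Next, I would extract a transport equation for the remaining unknown $S_{nn}$ by contracting $\nabla^a S_{ab} = 0$ with $n^b$. With $S_{ll}$, $S_{nl}$, and $S_{\zeta_a\zeta_a}$ already vanishing and all mixed angular components killed by spherical symmetry, only derivatives of $S_{nn}$ and products of $S_{nn}$ with connection coefficients of the frame \eqref{frame1} survive. I expect the resulting first-order identity to assemble, along each outgoing cone $\{u = \text{const.}\}$, into the total-derivative form
\be
\del_r \Big( r^2 e^{2\nu - \kappa\rho} \, S_{nn} \Big) = 0.
\ee
Since $D = e^{\nu} n$, the quantity conserved along the radial direction is precisely $r^2 e^{-\kappa\rho}(E - 8\pi T)(D,D)$. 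The regularity hypothesis \eqref{equa-key-regular} forces this conserved quantity to vanish at $r = 0$, so $S_{nn} \equiv 0$ on each cone and the identification $E = 8\pi T$ is complete.

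The main obstacle is the third step: computing the connection coefficients of the null frame \eqref{frame1} (or equivalently working in the coordinate frame and transforming) and verifying that the combination arising from $n^b \nabla^a S_{ab}$, after the other components are set to zero, collapses into a total derivative with the specific integrating factor $r^2 e^{2\nu - \kappa\rho}$. The fact that this weight matches exactly the quantity appearing in \eqref{equa-key-regular} is the essential design feature of the formulation: the regularity condition is tailored precisely to supply the boundary value that annihilates the constant of integration, so its role is not incidental but structural.
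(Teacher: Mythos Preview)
Your proposal is correct and follows the same overall architecture as the paper's proof: first eliminate the angular components, then derive a radial transport equation for $S_{nn}$ and kill it via the center condition \eqref{equa-key-regular}. Your treatment of $S_{nn}$ coincides with the paper's: the paper obtains precisely $\del_r F_{nn} + 2\big(\del_r\nu + \tfrac{1}{r} - \tfrac{\kappa}{2}\del_r\rho\big)F_{nn} = 0$ from the $n$-component of the Bianchi identity (using the frame connection coefficients in \eqref{connection-on-frame}), which is exactly your total-derivative form $\del_r\big(r^2 e^{2\nu-\kappa\rho} S_{nn}\big) = 0$, and then invokes \eqref{equa-key-regular} as you describe.

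Your handling of $S_{\zeta_1\zeta_1}$ is, however, genuinely more direct than the paper's. The paper does not use the trace shortcut; instead it expands $F_{\zeta_1\zeta_1}$ explicitly in Bondi coordinates, relates it to the conformal scalar curvature $\Rt$ via \eqref{eq:Rtilde}, applies the conformal transformation \eqref{R-conformal-transf}, and arrives at
\[
2F_{\zeta_1\zeta_1} = -6e^{\kappa\rho}\Big(\Wstar(\rho) - \tfrac{4\pi}{3}e^{-2\kappa\rho}(\sigma+4U(\phi))\Big) + 3\kappa e^{\kappa\rho}\Box_{\gt}\rho,
\]
which vanishes by the wave equation \eqref{equa-Laplace-rho} for $\rho$. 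Since \eqref{eq:599} is equivalent to the trace equation \eqref{trace}, your one-line argument $\mathrm{tr}_g(S) = e^{\kappa\rho}\big(-2S_{nl} + 2S_{\zeta_1\zeta_1}\big) = 0$ is valid and bypasses this curvature algebra entirely. The paper's longer route has the minor advantage of making the dependence on the $\rho$-evolution equation explicit at the coordinate level, and similar machinery is reused in the augmented-system analysis of Section~\ref{section---A2}; but for the present proposition your trace argument is the cleaner one.
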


{ {We point out that the quantity appearing in \eqref{equa-key-regular}, more explicitly, reads}} 
\be
\aligned
 r^2 e^{\kappa \rho} 
 \Bigg(
 \frac{e^{\kappa\rho}}{re^{\nu+\lambda}} \, \frac{1}{2}e^{\nu- \lambda} \Big( \frac{3}{4}\kappa^{2}r|\del_r\rho|^2 + 4\pi r e^{- \kappa\rho} |\del_r \phi|^2 \Big)
 - 2  \frac{e^{\kappa\rho}}{re^{\nu+\lambda}} \del_u\lambda
- \frac{3\kappa^{2}e^{\kappa\rho}}{2e^{2\nu}}(D\rho)^2
- e^{-2\nu}(D\phi)^2 \Bigg) 
\endaligned
\ee
and is thus equivalent to a condition on the behavior of the geometric and matter unknowns. 

\begin{proof} 
\bse
 We need to establish that the remaining equations
\be
E_{nn} = 8\pi T_{nn}, \qquad
E_{\zeta_1\zeta_1} = 8\pi T_{\zeta_1\zeta_1}, \qquad
E_{\zeta_2\zeta_2} = 8\pi T_{\zeta_2\zeta_2}
\ee
can be deduced from the essential equations, as follows. We introduce the tensor
$
F_{ab} := E_{ab} - 8\pi T_{ab}$ and 
we prove first that $F_{\zeta_1\zeta_1}=F_{\zeta_2\zeta_2}= 0$. In fact, by a straightforward calculation
\be
\aligned
F_{\zeta_1\zeta_1}& =  e^{\kappa\rho}\Big(\frac{1}{r}e^{-2\lambda}\del_r(\nu- \lambda)+e^{-2\lambda}(\del_{rr}\nu+(\del_r\nu)^2- \del_r\nu\del_r\lambda)-e^{- \nu- \lambda}\del_{ur}(\nu+\lambda)\Big)\\
& \quad - \frac{3}{2}\kappa^2e^{\kappa\rho- \nu- \lambda}D\rho\del_r\rho+e^{- \kappa\rho}\Big(\Vstar(\rho)+8\pi \big(\frac{\sigma}{2}+U(\phi)\big)\Big).
\endaligned
\ee
\ese
\bse
Plugging the expression above into \eqref{eq:Rtilde}, we obtain 
\be
\aligned
e^{\kappa\rho}\Rt+2F_{\zeta_1\zeta_1} 
=
& -3\kappa^2e^{\kappa\rho- \nu- \lambda}D\rho\del_r\rho+2e^{- \kappa\rho}\Big(\Vstar(\rho)+8\pi \big(\frac{\sigma}{2}+U(\phi)\big)\Big)
\\
&  +\frac{2}{r^{2}}e^{\kappa\rho- \nu- \lambda}\Big(e^{\nu+\lambda}- \del_r\big(re^{\nu- \lambda}\big)\Big).
\endaligned
\ee
Therefore, it follows from \eqref{equa-three-equa} that
\bel{R and Fzeta}
e^{\kappa\rho}\Rt= -3\kappa^2e^{\kappa\rho- \nu- \lambda}D\rho\del_r\rho+4e^{- \kappa\rho}\Big(\Vstar(\rho)+2\pi \big(\sigma+4 \, U(\phi)\big)\Big)-2F_{\zeta_1\zeta_1},
\ee 
which, using the conformal transformation of the scalar curvature  
\bel{R-conformal-transf}
\aligned
R & = e^{\kappa\rho}\big(\widetilde R + 3\kappa\Box_{\gt}\rho - \frac{3}{2}\kappa^{2}\gt(\nablat\rho,\nablat\rho)\big)
\\
& = e^{\kappa\rho}\big(\Rt + 3\kappa\Box_{\gt}\rho + 3\kappa^{2}e^{- \nu- \lambda}D\rho\del_r\rho\big), 
\endaligned
\ee
yields
\bel{Fzeta= 0}
\aligned
2F_{\zeta_1\zeta_1}=
& \quad -R+4e^{- \kappa\rho}\Big(\Vstar(\rho)+2\pi \big(\sigma+4 \, U(\phi)\big)\Big)+3\kappa e^{\kappa\rho}\Box_{\gt}\rho\\
& =  -6e^{\kappa\rho}\Big(\Wstar(\rho) - \frac{4\pi}{3}e^{-2\kappa\rho}\big(\sigma+4 \, U(\phi)\big)\Big)+3\kappa e^{\kappa\rho}\Box_{\gt}\rho.
\endaligned
\ee
This expression vanishes thanks to \eqref{equa-Laplace-rho} derived earlier and, by \eqref{Fzeta= 0}, we get 
$F_{\zeta_1\zeta_1}=F_{\zeta_2\zeta_2}= 0$, as claimed.
\ese
\bse
We will now treat $F_{nn}$. In the null frame, we have 
\be
g_{\alpha\beta} = \left(
\begin{array}{cccc}
 0 & -e^{- \kappa\rho} & 0 & 0 \\
-e^{- \kappa\rho} & 0  & 0 & 0 \\
 0 & 0  & e^{- \kappa\rho} & 0 \\
 0 & 0  & 0 & e^{- \kappa\rho}
\end{array} \right),
\ee
so from $\nabla^{a}F_{ab} = 0$ we obtain
\be
- \nabla_n F_{lc} - \nabla_l F_{nc} + \nabla_{\zeta_1}F_{\zeta_1c} + \nabla_{\zeta_2}F_{\zeta_2c} = 0.
\ee
Using
$
\nabla_{a}F_{bc} = \langle a, F_{bc}\rangle - \Gamma_{ab}^{c}F_{dc} - \Gamma_{ac}^{d}F_{bd},
$ 
and the fact that in spherical symmetry $\del_{\zeta_1} F_{\zeta_1c} = \del_{\zeta_2} F_{\zeta_2c} = 0$,
we rewrite the above relation as
\bel{proof 5.1 2}
\aligned
& \quad - \langle n, F_{lc}\rangle +\Gamma_{nc}^{d}F_{ld} +\Gamma_{nl}^d F_{dc}
- \langle l, F_{nc}\rangle +\Gamma_{lc}^{d}F_{nd}\\ & \quad + \Gamma_{ln}^{d}F_{dc}
 - \Gamma_{\zeta_1c}^{d}F_{\zeta_1d} - \Gamma_{\zeta_1\zeta_1}^{d}F_{dc}
 - \Gamma_{\zeta_2c}^{d}F_{\zeta_2d} - \Gamma_{\zeta_2\zeta_2}^{d}F_{dc} = 0.
\endaligned
\ee 
Making the choice $c = n$ in \eqref{proof 5.1 2} and using \eqref{connection-on-frame}, we get
\be
\aligned
& - \langle n, F_{ln}\rangle + \Gamma_{nl}^{d}F_{d n} + \Gamma_{nn}^{d}F_{ld} - \langle l,F_{nn}\rangle + \Gamma_{ln}^{d}F_{d n} + \Gamma_{ln}^{d}F_{n d}
\\
&  - \Gamma_{\zeta_1\zeta_1}^{d}F_{d n} - \Gamma_{\zeta_1 n}^{d}F_{\zeta_1d} - \Gamma_{\zeta_2\zeta_2}^{\delta}F_{d n} - \Gamma_{\zeta_2 n}^{d}F_{d\zeta_2} = 0.
\endaligned
\ee
So taking into account that
$
F_{\zeta_1\zeta_1} = F_{\zeta_2\zeta_2} = F_{ln} = {F_{ll} = 0}$, 
we find
\ese
\bse
\be
\del_r F_{nn} + 2\Big(\del_r\nu + \frac{1}{r} - \frac{\kappa}{2} \del_r\rho\Big)F_{nn} = 0.
\ee
Finally, by integrating this equation it follows that
\be
F_{nn}(u,r) = \frac{r_0^2}{r^2}e^{2\big((\nu- \frac{\kappa}{2}\rho)|_{(u,r_0)} - (\nu- \frac{\kappa}{2}\rho)|_{(u,r)}\big)}F_{nn}(u,r_0), 
\ee
in which $F_{nn} = e^{-2 \nu} F(D,D)$. 
By the assumption \eqref{equa-key-regular} on the regularity at the center, we conclude that  $F_{nn} = 0$ for all $r_0$ and this completes the proof.
\ese
\end{proof}


\section{Analysis of the regularity at the center}
\label{section---A2}
 
 \subsection{Augmented conformal formulation in Bondi coordinates}
 
{ 
\paragraph{Aim.}

One of our objectives to justify, at the level of the field equations, that the augmented conformal system provides a \emph{closed} characteristic formulation of $f(R)$ gravity without introducing any spurious degrees of freedom.
The difficulty is that the original $f(R)$ equations contain derivatives of the scalar curvature (hence third-order derivatives of the metric coefficients in Bondi gauge), so that a naive first-order characteristic reduction is not closed.
Our strategy is to treat the conformal variable $\rho=\kappa^{-1}\ln f'(R)$ as an \emph{independent} unknown (denoted $\widehat\rho$ in \eqref{equa-hdj4}, below) and to work with an augmented system in which $R$ is not eliminated a priori.
The core statement is a \emph{consistency mechanism}: under mild $C^1$ regularity and center regularity. the nonlinear constraint $f'(R)=e^{\kappa\widehat\rho}$ propagates and the augmented system turns out to be equivalent to the original one in the admissible class.
Finally, we will establish center regularity properties needed to make sense of $R$ and $\partial_rR$ and to justify the limiting argument as $r\to0$.
}

\paragraph{The augmented system.}

In the modified field equations, there are terms that contain derivatives of the scalar
curvature, that is, third order derivatives of the metric functions $\nu$ and $\lambda$. This leads to an essential difficulty in dealing with the equations.
Following LeFloch and Ma~\cite{LeFlochMa17a}, we introduce an \emph{augmented system}
 where the relation
\be
\rho = \frac{1}{\kappa}\ln f'(R)  \quad \text{(conformal formulation)} 
\ee
is no longer imposed but $\rho$ 
 is regarded as an independent variable.
To clarify the notation, we introduce a new independent variable denoted by 
\bel{equa-hdj4}
\rhoh \quad \text{(augmented conformal formulation)} 
\ee
which plays the role of $\rho$ and will {  coincide} with $\rho$ once the constraint below is enforced. 
In view of Proposition~\ref{prop essential f(R) Bondi}, this leads us to the following system 
\bel{Bondi system diff augmented1.5}
\aligned
\del_r(\nu+\lambda)
& =  \frac{3}{4}\kappa^{2}r|\del_r\rhoh|^2 + 4\pi r e^{- \kappa\rhoh} \, |\del_r\phi|^2,
\\
\del_r\big(re^{\nu- \lambda}\big)
 & =   \big(1 - r^{2}e^{-2\kappa\rhoh} \big( \Vstar(\rhoh) + 8\pi U(\phi) \big) \big)e^{\nu+\lambda},
 \\
D(\del_r(r\phi)) + {r \over 2} e^{\nu+\lambda- \kappa\rhoh}U'(\phi)
& =  \frac{1}{2}e^{\nu+\lambda}\Big(1 - e^{-2\lambda} - r^2e^{-2\kappa\rhoh}\big(\Vstar(\rhoh)+8\pi U(\phi)\big)\Big)\del_r\phi \\
& \quad + \frac{\kappa r}{2}\big(D\rhoh \del_r\phi + D\phi \del_r\rhoh\big), 
\\
D\big(\del_r(r\rhoh)\big) + {r \over \kappa}  e^{\nu+\lambda} \Wstar(\rhoh)
& =  \frac{1}{2}e^{\nu+\lambda}\Big(1 - e^{-2\lambda} - r^2e^{-2\kappa\rhoh}\big(\Vstar(\rhoh)+8\pi U(\phi)\big)\Big)\del_r\rhoh
\\
& - \frac{8\pi r}{3\kappa} e^{- \kappa\rhoh}\del_r\phi D\phi
  + {r \over \kappa} e^{\nu+\lambda} \frac{16\pi}{3}e^{-2\kappa\rhoh}U(\phi).
\endaligned
\ee
Observe that, as before, the last equation of \eqref{Bondi system diff augmented1.5} is equivalent to
\be
\label{box-tilde-rho}
\Box_{\gt}\rhoh  - \frac{2}{\kappa} \Wstar(\rhoh)
=  - {8\pi \over 3\kappa}e^{-2\kappa\rhoh}\big(\sigma+4 \, U(\phi)\big).
\ee
Now that we have defined the augmented system, the next question is whether a solution of \eqref{Bondi system diff augmented1.5} (with a certain regularity) is also a solution of the original system. We emphasize that to be a solution of the full set of field equations, a solution of  \eqref{equa-three-equa}, \eqref{Bondi scalar}, and \eqref{eq:599} 
must additionally satisfy the condition
\bel{Bondi system diff augmented1.5 constraint}
f'(R) = e^{\kappa\rhoh},
\ee
which we regard as a nonlinear differential constraint on the solutions.
 Conversely, any classical solution of the original $f(R)$ system with $\rho=\frac{1}{\kappa}\ln f'(R)$ satisfies \eqref{Bondi system diff augmented1.5} upon setting $\rhoh=\rho$; thus the two formulations are equivalent in the admissible class once \eqref{Bondi system diff augmented1.5 constraint} holds on the initial cone (and is then propagated by the evolution), {  as we will show.}
 

\subsection{Regularity and consistency properties} 

{ 

The augmented system allows us to view $R$ and $\widehat\rho$ as \emph{a priori} independent.
To recover the original theory, we must show that the quantity $R-\Rstar(\widehat\rho)$ vanishes identically once it vanishes at the center.
We are going to prove that $R-\Rstar(\widehat\rho)$ satisfies a first-order radial ODE with an explicit integrating factor determined by $\widehat\rho$ and the Bondi gauge.
Center regularity then forces the corresponding integration constant to be $0$, which yields $R=\Rstar(\widehat\rho)$ and hence $f'(R)=e^{\kappa\widehat\rho}$ throughout the domain.
}

We start by stating the main result of this section.  

\begin{proposition}[Consistency property  for the augmented conformal system]
\label{prop Bondi augemented}
Let $(\phi,\rhoh,\nu,\lambda)$ be a $C^1$ solution of the augmented conformal system 
\eqref{Bondi system diff augmented1.5} defined on $[0,u_0]\times [0,\infty)$.
Then, provided that the functions 
\be
\del_r(r\phi) \text { and } \del_r(r\widehat\rho) \text{ are } C^1 \text{ on } [0,u_0]\times [0,+ \infty),
\ee
one has
\be
f'(R) =e^{\kappa\rhoh}.
\ee 
Conversely, if a classical solution of the original $f(R)$ system satisfies $f'(R)=e^{\kappa\rho}$ on the initial cone and $\partial_r(r\phi),\partial_r(r\rho)\in C^1$, then setting $\rhoh=\rho$ yields a $C^1$ solution of \eqref{Bondi system diff augmented1.5}. Hence, within the admissible class, the augmented and original formulations are equivalent.
\end{proposition}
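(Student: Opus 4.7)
The plan is to mirror the structure of the proof of Proposition~\ref{prop essential f(R) Bondi} and, by combining the four equations of the augmented system \eqref{Bondi system diff augmented1.5} with the conformal transformation law \eqref{R-conformal-transf}, compute the scalar curvature $R$ of the physical metric $g = e^{-\kappa\rhoh}\gt$ and show that it equals $\Rstar(\rhoh)$. By the definition \eqref{fprime} of $\Rstar$, this is exactly equivalent to $f'(R) = e^{\kappa\rhoh}$, which is the desired constraint \eqref{Bondi system diff augmented1.5 constraint}.

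First, I would compute the scalar curvature $\widetilde R$ of the Bondi-form metric $\gt$ directly from \eqref{eq:metricrad}, which produces an expression in $\nu$, $\lambda$, and their first and second-order derivatives. The first two equations of \eqref{Bondi system diff augmented1.5} determine $\del_r(\nu+\lambda)$ and $\del_r(re^{\nu-\lambda})$ explicitly in terms of $\rhoh$, $\phi$, and $\Vstar(\rhoh)$; differentiating these relations, which is legitimate precisely because $\del_r(r\phi)$ and $\del_r(r\rhoh)$ are assumed $C^1$, allows me to eliminate the remaining second-order derivatives of $\nu$ and $\lambda$ appearing in $\widetilde R$. The outcome is the formal analog of \eqref{R and Fzeta} with $F_{\zeta_1\zeta_1}$ set to zero and $\rho$ replaced by $\rhoh$, namely
\be
e^{\kappa\rhoh}\widetilde R = -3\kappa^2 e^{\kappa\rhoh-\nu-\lambda}D\rhoh\,\del_r\rhoh + 4e^{-\kappa\rhoh}\bigl(\Vstar(\rhoh)+2\pi(\sigma+4U(\phi))\bigr).
\ee

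Next, applying the conformal transformation \eqref{R-conformal-transf} with $\rhoh$ in place of $\rho$, the cross-term $3\kappa^2 e^{\kappa\rhoh-\nu-\lambda}D\rhoh\,\del_r\rhoh$ cancels and I obtain
\be
R = 4e^{-\kappa\rhoh}\bigl(\Vstar(\rhoh)+2\pi(\sigma+4U(\phi))\bigr) + 3\kappa e^{\kappa\rhoh}\Box_{\gt}\rhoh.
\ee
The fourth equation of \eqref{Bondi system diff augmented1.5}, equivalent to \eqref{box-tilde-rho}, then lets me replace $\Box_{\gt}\rhoh$ by $\frac{2}{\kappa}\Wstar(\rhoh) - \frac{8\pi}{3\kappa}e^{-2\kappa\rhoh}(\sigma+4U(\phi))$; the matter terms involving $\sigma+4U(\phi)$ then cancel exactly, leaving $R = 4e^{-\kappa\rhoh}\Vstar(\rhoh) + 6e^{\kappa\rhoh}\Wstar(\rhoh)$. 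A short algebraic computation using the definitions \eqref{asym W_star} and \eqref{eq:pot} together with $f'(\Rstar(\rhoh))=e^{\kappa\rhoh}$ collapses this to $R = \Rstar(\rhoh)$, and the conclusion $f'(R)=e^{\kappa\rhoh}$ follows from \eqref{fprime}.

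For the converse, a classical solution of the original $f(R)$ system satisfies the essential equations \eqref{equa-three-equa}, \eqref{Bondi scalar}, and \eqref{eq:599} by Proposition~\ref{prop essential f(R) Bondi}, and these coincide with \eqref{Bondi system diff augmented1.5} upon the substitution $\rhoh=\rho$; the additional $C^1$ regularity of $\del_r(r\phi)$ and $\del_r(r\rho)$ is assumed. The main obstacle I anticipate is not conceptual but rather bookkeeping: the explicit computation of $\widetilde R$ from the Bondi metric requires expressing several second-order mixed derivatives of $\nu$ and $\lambda$ via the constraint equations, and one must verify that the precise $C^1$ hypothesis on $\del_r(r\phi)$ and $\del_r(r\rhoh)$ is exactly what is needed for each such substitution to be classically meaningful. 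The cancellations in the final algebraic step are delicate but forced, mirroring those that drive the proof of Proposition~\ref{prop essential f(R) Bondi}.
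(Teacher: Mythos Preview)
Your argument contains a genuine gap at the first displayed formula. You claim that after computing $\widetilde R$ and substituting the first two equations of \eqref{Bondi system diff augmented1.5} (and their derivatives) you obtain the analogue of \eqref{R and Fzeta} \emph{with $F_{\zeta_1\zeta_1}$ set to zero}. This is false. The identity \eqref{R and Fzeta} in the paper is obtained by forming the combination $e^{\kappa\rho}\widetilde R + 2F_{\zeta_1\zeta_1}$, in which the second--order terms $\del_{rr}\nu$ and $\del_{ur}(\nu+\lambda)$ cancel \emph{by construction}; only then do the constraint equations simplify the remainder. In the augmented setting the same manipulation yields
\[
e^{\kappa\rhoh}\widetilde R \;=\; -3\kappa^{2}e^{\kappa\rhoh-\nu-\lambda}D\rhoh\,\del_r\rhoh + 4e^{-\kappa\rhoh}\bigl(\Vstar(\rhoh)+2\pi(\sigma+4U(\phi))\bigr) \;-\; 2F^{\rhoh}_{\zeta_1\zeta_1},
\]
with the extra term present. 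Carrying your remaining steps through correctly gives only the tautology $R-\Rstar(\rhoh) = -2F^{\rhoh}_{\zeta_1\zeta_1}$ (this is exactly the paper's trace computation in Lemma~\ref{sec 6 lem R_g main estimate}), not $R=\Rstar(\rhoh)$. Differentiating the two constraint equations does not help: it merely re-expresses the second--order derivatives of $\nu,\lambda$ in terms of $\del_{ur}\rhoh,\del_{rr}\rhoh,\del_{ur}\phi,\del_{rr}\phi$, and the single box equation for $\rhoh$ eliminates only one combination of these.

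What is actually needed, and what the paper does, is a propagation argument: from the divergence identity $\nabla^{\alpha}E^{\rhoh}_{\alpha\beta} = \tfrac12 e^{\kappa\rhoh}\del_\beta(R-\Rstar(\rhoh))$ together with the Klein--Gordon equation (which you never use) one derives a homogeneous linear ODE in $r$ for $R-\Rstar(\rhoh)$. This ODE has nontrivial solutions, so the constraint cannot hold algebraically. The hypothesis $\del_r(r\phi),\del_r(r\rhoh)\in C^1$ is then used not merely to justify differentiation but as a \emph{boundary condition at the center}: it forces $F^{\rhoh}_{\zeta_1\zeta_1}(u,r_0)\,e^{-\int_{r_0}^r 2s^{-1}e^{-\kappa\rhoh}ds}\to 0$ as $r_0\to 0$, which selects the zero solution. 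Your outline never invokes the center condition in this way, and that is the missing idea.
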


The proof is based on the following two lemmas, whose proof is postponed to Appendix \ref{append-twolemmas}. 

\begin{lemma}
\label{sec 6 lem R_g apriori}
Consider a $C^1$ solution $(\phi,\rhoh,\nu,\lambda)$ of \eqref{Bondi system diff augmented1.5} defined on $[0,u_0]\times [0,\infty)$. Assume that $\phi$ and $\rhoh$ satisfy the regularity condition at the center:
\be
\del_r(r\phi),\,\del_r(r\rhoh) \in C^1\big([0,u_0]\times[0,\infty)\big).
\ee
Then, the scalar curvature $R$ is well defined on $[0,u_0]\times(0,\infty)$ and $\del_r R$ is also well defined on $[0,u_0]\times(0,\infty)$ and continuous.
Moreover, $\del_{rr}(\nu+\lambda)$ and $\del_{ur}(\nu+\lambda)$ can be continuously extended on $[0,u_0]\times[0,+ \infty)$.
\end{lemma}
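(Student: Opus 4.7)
My strategy is to propagate regularity from the scalar unknowns $\phi,\rhoh$ to the metric coefficients $\nu,\lambda$ using the two algebraic relations furnished by the first two equations of \eqref{Bondi system diff augmented1.5}, and then to exploit the fourth equation of the augmented system to eliminate all explicit second derivatives of $\rhoh$ from the expression for $R$. First, since $\rhoh\in C^1$ and $\del_r(r\rhoh)\in C^1$, the identities $r\del_{rr}\rhoh=\del_r[\del_r(r\rhoh)]-2\del_r\rhoh$ and $r\del_{ur}\rhoh=\del_u[\del_r(r\rhoh)]-\del_u\rhoh$ show that $r\del_{rr}\rhoh$ and $r\del_{ur}\rhoh$ extend continuously to $[0,u_0]\times[0,\infty)$, so $\del_{rr}\rhoh,\del_{ur}\rhoh\in C^0$ on $\{r>0\}$; the same holds for $\phi$. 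Differentiating the first augmented equation $\del_r(\nu+\lambda)=\tfrac34\kappa^2 r|\del_r\rhoh|^2+4\pi r e^{-\kappa\rhoh}|\del_r\phi|^2$ in $r$ and in $u$, every second-order scalar derivative that appears is multiplied by $r$ and hence combines with the continuous functions $r\del_{rr}\rhoh, r\del_{ur}\rhoh, r\del_{rr}\phi, r\del_{ur}\phi$ into expressions continuous on the closed domain; all other terms are manifestly continuous. This establishes the claimed continuous extension of $\del_{rr}(\nu+\lambda)$ and $\del_{ur}(\nu+\lambda)$ to $[0,u_0]\times[0,\infty)$.

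For the well-definedness of $R$, I work on $\{r>0\}$: differentiating the second algebraic equation $\del_r(re^{\nu-\lambda})=\big(1-r^{2}e^{-2\kappa\rhoh}(\Vstar(\rhoh)+8\pi U(\phi))\big)e^{\nu+\lambda}$ in $r$ and in $u$, and dividing by $re^{\nu-\lambda}$, expresses $\del_{rr}(\nu-\lambda)$ and $\del_{ur}(\nu-\lambda)$ as continuous functions on $\{r>0\}$. Combined with the previous paragraph, all four of $\del_{rr}\nu, \del_{rr}\lambda, \del_{ur}\nu, \del_{ur}\lambda$ are continuous on $\{r>0\}$, so the conformal metric $\gt$ and the physical metric $g=e^{-\kappa\rhoh}\gt$ are $C^2$ there. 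Therefore $R=R_g$ is well defined and continuous on $[0,u_0]\times(0,\infty)$.

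For the continuity of $\del_r R$, I use the general conformal-transformation identity
$$R=e^{\kappa\rhoh}\big(\Rt+3\kappa\Box_{\gt}\rhoh+3\kappa^{2}e^{-\nu-\lambda}D\rhoh\,\del_r\rhoh\big)$$
and substitute the fourth augmented equation $\Box_{\gt}\rhoh=\tfrac{2}{\kappa}\Wstar(\rhoh)-\tfrac{8\pi}{3\kappa}e^{-2\kappa\rhoh}(\sigma+4U(\phi))$, which removes every explicit second derivative of $\rhoh$; the remaining second-order data enter only through $\Rt$ and only via combinations of $\del_{rr}(\nu\pm\lambda)$ and $\del_{ur}(\nu+\lambda)$ controlled above. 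Differentiating the resulting expression in $r$ produces third-order metric derivatives, which I recover by one further $r$-differentiation of the two algebraic equations, and third-order derivatives of $\rhoh,\phi$, which I control by $r$-differentiating the transport equations (the third and fourth equations of the augmented system): these yield, via the same weighted-identity trick as in the first step, that $r\del_{rrr}\rhoh$ and $r\del_{rrr}\phi$ are continuous on $\{r>0\}$. The main obstacle of the whole argument lies precisely here: one must verify that every third-order scalar derivative arising in $\del_r R$ is multiplied by a factor $r$, so that the $r^{-1}$ from the algebraic solution at positive radius is absorbed; this compatibility is a direct manifestation of the $r$-weights built into the augmented system.
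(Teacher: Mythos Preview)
Your argument for the continuous extension of $\del_{rr}(\nu+\lambda)$ and $\del_{ur}(\nu+\lambda)$, and for the well-definedness of $R$ on $\{r>0\}$, follows essentially the same route as the paper: the weighted identities $r\del_{rr}\rhoh=\del_{rr}(r\rhoh)-2\del_r\rhoh$ and $r\del_{ur}\rhoh=\del_u[\del_r(r\rhoh)]-\del_u\rhoh$, followed by differentiation of the first two equations of the augmented system. The paper organizes the verification around the explicit formula for $e^{\nu+\lambda}\Rt$ and checks that $\del_{rr}(\nu+\lambda)$, $\del_{ur}(\nu+\lambda)$, and $r\del_{rr}(\nu-\lambda)$ extend continuously to the closed domain; the content is the same as yours.

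Your treatment of $\del_r R$ goes beyond what the paper's proof actually carries out (the paper's argument is in fact silent on this part of the statement and ends after the second-order extensions), but the sketch you give has a genuine gap. The hypothesis $\del_r(r\phi),\del_r(r\rhoh)\in C^1$ yields, via the weighted identities, only that $r\del_{rr}\phi$, $r\del_{ur}\phi$, $r\del_{rr}\rhoh$, $r\del_{ur}\rhoh$ are $C^0$; it does \emph{not} give differentiability of these second-order objects. Consequently you cannot legitimately ``$r$-differentiate the transport equations'': the left-hand side $D(\del_r(r\phi))$ is only known to be $C^0$ (since $\del_r(r\phi)\in C^1$ and $D$ is first order), so applying $\del_r$ to it is unjustified. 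The weighted-identity trick does not iterate either: to write $r\del_{rrr}\phi=\del_{rrr}(r\phi)-3\del_{rr}\phi$ you would need $\del_r(r\phi)\in C^2$, which is not assumed. Hence the claimed continuity of $r\del_{rrr}\rhoh$ and $r\del_{rrr}\phi$ on $\{r>0\}$ is unsubstantiated, and with it your argument for the continuity of $\del_r R$.
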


\begin{lemma}
\label{sec 6 lem R_g main estimate}
Let $(\phi,\rhoh,\nu,\lambda)$ be a $C^1$ solution of \eqref{Bondi system diff augmented1.5} defined on $[0,u_0]\times[0,\infty)$. Then, for any $r,r_0 > 0$, the following identity holds:  
\bel{ODE2}
{ {S(u,r)= S(u,r_0)\exp\!\Big(- {\int_{r_{0}}^r \frac{2}{s}e^{- \kappa\rhoh(u,s)} \, ds}\Big),}}
\ee
in which
\bel{eq:939k}
\aligned 
&F_{\zeta_1\zeta_1}^{\rhoh}(u,r):= e^{\kappa\rhoh-2\lambda}\Big(\frac{1}{r }\del_r(\nu- \lambda)+\del_{rr}\nu+(\del_r\nu)^2- \del_r\nu\del_r\lambda-e^{\lambda- \nu}\del_{ur}(\nu+\lambda)\Big)\\
& \qquad\qquad\qquad- \frac{3}{2}\kappa^2e^{\kappa\rhoh- \nu- \lambda}D\rhoh\del_r\rhoh+e^{- \kappa\rhoh}\Big(\Vstar(\rhoh)+8\pi \big(\frac{\sigma}{2}+U(\phi)\big)\Big),\\
& { {S(u,r)  := (R - \Rstar(\rhoh))(u,r)\, \exp\!\big(e^{-\kappa \rhoh(u,r)}\big).}}
\endaligned
\ee
{ {In particular, one has the identity $R-\Rstar(\rhoh)=2F_{\zeta_1\zeta_1}^{\rhoh}$ and, therefore, 
\be
S(u,r)=2F_{\zeta_1\zeta_1}^{\rhoh}(u,r)\exp(e^{-\kappa\rhoh(u,r)}).
\ee
}}
\end{lemma}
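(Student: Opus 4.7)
The plan is to prove Lemma~\ref{sec 6 lem R_g main estimate} in two stages: first establish the algebraic identity relating $2F^{\rhoh}_{\zeta_1\zeta_1}\, e^{\kappa\rhoh}$ to the constraint mismatch $R - \Rstar(\rhoh)$, and then derive the linear first-order transport equation in $r$ that yields the explicit exponential formula.

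\textbf{Algebraic identity.} I would first revisit the computation of $F_{\zeta_1\zeta_1}$ carried out in the proof of Proposition~\ref{prop essential f(R) Bondi}, now within the augmented setting where $\rhoh$ is independent of $(1/\kappa)\log f'(R)$. Using the first two equations of \eqref{Bondi system diff augmented1.5} (the augmented analogues of \eqref{equa-three-equa}) and expanding the scalar curvature $\Rt$ of the conformal metric $\gt$ in Bondi coordinates, one obtains
\be
e^{\kappa\rhoh}\Rt = -3\kappa^{2} e^{\kappa\rhoh-\nu-\lambda}D\rhoh\,\del_r\rhoh + 4e^{-\kappa\rhoh}\big(\Vstar(\rhoh) + 2\pi(\sigma+4U(\phi))\big) - 2F^{\rhoh}_{\zeta_1\zeta_1}.
\ee
Substituting this into the conformal transformation \eqref{R-conformal-transf} with $\rho$ replaced by $\rhoh$, and then replacing $\Box_{\gt}\rhoh$ via the wave equation \eqref{box-tilde-rho}, the matter contributions proportional to $\sigma + 4U(\phi)$ cancel. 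Finally, the purely algebraic identity $4e^{-\kappa\rhoh}\Vstar(\rhoh) + 6e^{\kappa\rhoh}\Wstar(\rhoh) = \Rstar(\rhoh)$, which follows directly from \eqref{eq:pot} and \eqref{asym W_star}, yields the compact relation $R - \Rstar(\rhoh) = -2F^{\rhoh}_{\zeta_1\zeta_1}$. Multiplying by the appropriate exponential factors produces the first equality in \eqref{ODE2} and identifies $S$ as intended.

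\textbf{Transport equation.} The second stage is to establish the linear ODE $\del_r S + (2/r)\,e^{-\kappa\rhoh} S = 0$. The strategy mirrors the $F_{nn}$-argument at the end of the proof of Proposition~\ref{prop essential f(R) Bondi}, namely to exploit a contracted Bianchi-type identity. In the augmented formulation, the modified gravity tensor $E^{\rhoh}_{ab}$ (built from $\rhoh$ rather than from $\rho$) is no longer exactly divergence-free: instead, $\nabla^a E^{\rhoh}_{ab}$ acquires a source term proportional to $\del_b(R - \Rstar(\rhoh))$. Subtracting $\nabla^a T_{ab} = 0$ (which remains valid thanks to the Klein-Gordon equation) and projecting the resulting identity onto a suitable null direction yields the desired first-order ODE in $r$ for $S$, with the explicit coefficient $-(2/r)\,e^{-\kappa\rhoh}$ emerging from the transport of the angular component along the light-ray direction. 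Integration of this linear equation with the standard integrating factor then gives the closed-form expression in \eqref{ODE2}.

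\textbf{Main obstacle.} The technical heart of the argument lies in the derivation of the transport equation. Naively differentiating $R - \Rstar(\rhoh)$ in $r$ introduces third-order derivatives of $\nu$ and $\lambda$, since $F^{\rhoh}_{\zeta_1\zeta_1}$ itself contains $\del_{rr}\nu$ and $\del_{ur}(\nu+\lambda)$ (cf.\ \eqref{eq:939k}). The Bianchi identity forces the cancellation of these higher-order contributions, but the bookkeeping --- systematically re-expressing $\del_{rr}\nu$, $\del_{ur}(\nu+\lambda)$, and the mixed derivatives of $\rhoh$ in terms of lower-order quantities via \eqref{Bondi system diff augmented1.5} --- is the principal analytic burden. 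Once this reduction is executed cleanly, the coefficient $-(2/r)\,e^{-\kappa\rhoh}$ falls out unambiguously and the integration step is immediate.
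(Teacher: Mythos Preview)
Your proposal is correct and follows essentially the same architecture as the paper: the algebraic identity $2F^{\rhoh}_{\zeta_1\zeta_1} = \pm(R-\Rstar(\rhoh))$ together with a Bianchi-type divergence formula for the augmented tensor $E^{\rhoh}_{\alpha\beta}$, projected onto a null direction, yields the linear ODE in $r$ which integrates to \eqref{ODE2}.

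A few minor differences in execution are worth noting. For the algebraic identity, you mimic the direct expansion from the proof of Proposition~\ref{prop essential f(R) Bondi}, whereas the paper instead computes the trace $g^{\alpha\beta}F^{\rhoh}_{\alpha\beta}$ in two ways (once via the definition of $E^{\rhoh}$ and once in the null frame, using $F^{\rhoh}_{nl}=F^{\rhoh}_{ll}=0$); both routes are valid. For the transport equation, the paper derives the key identity $\nabla^{\alpha}E^{\rhoh}_{\alpha\beta}=\tfrac12 e^{\kappa\rhoh}\del_{\beta}(R-\Rstar(\rhoh))$ \emph{abstractly} from the contracted Bianchi identity and the commutator $[\Box_g,\nabla_\beta]$, and then the projection onto $l$ simplifies immediately because the first two equations of \eqref{Bondi system diff augmented1.5} force $F^{\rhoh}_{nl}=F^{\rhoh}_{ll}=0$. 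This bypasses the ``bookkeeping burden'' you anticipate: one never has to differentiate the explicit expression \eqref{eq:939k} and chase third-order cancellations by hand. You should make the vanishing of $F^{\rhoh}_{nl}$ and $F^{\rhoh}_{ll}$ explicit, since it is what reduces $\nabla^{\alpha}F^{\rhoh}_{\alpha l}$ to a multiple of $F^{\rhoh}_{\zeta_1\zeta_1}$ alone.
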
 

\begin{proof}[Proof of Proposition  \ref{prop Bondi augemented}] 
\bse
We observe, first, that for any given $r_0$, \eqref{ODE2} may be written as
\be
\aligned 
\frac12\,S(u,r)\,e^{-e^{-\kappa\rhoh(u,r)}} 
& =F^{\rhoh}_{\zeta_1\zeta_1}(u,r)
\\
& 
=F^{\rhoh}_{\zeta_1\zeta_1}(u,r_0)\,e^{\,e^{-\kappa\rhoh(u,r_0)}-e^{-\kappa\rhoh(u,r)}}
\exp\!\Big(- \int_{r_0}^r{\frac{2}{s}e^{- \kappa\rhoh(u,s)} \, ds}\Big).
\endaligned
\ee
Therefore, to show that $R= \Rstar(\rhoh)$, it suffices to show that fixing $(u,r) \in [0,u_0]\times (r_0,+ \infty),$
\bel{atcenter.1}
\lim_{r_{0}\to 0}\Big(F^{\rhoh}_{\zeta_1\zeta_1}(u,r_0)e^{- \int_{r_0}^r{\frac{2}{s}e^{- \kappa\rhoh} \, ds}}\Big) = 0.
\ee
In fact, since $\rhoh\in C^{1}\big([0,u_0]\times[0,+ \infty)\big)$, we have (for $r$ fixed)
\bel{center.1}
e^{- \int_{r_0}^r{\frac{2}{s}e^{- \kappa\rhoh} \, ds}}= O\!\big(r_0^{\,\alpha}\big),
\qquad \alpha:=2e^{-\kappa\rhoh(u,0)}>0,\quad \text{as } r_{0}\to 0.
\ee
On the other hand, since $\rhoh,\phi\in C^{1}\big([0,u_0]\times[0,+ \infty)\big)$, the same holds for $\nu$ and $\lambda$, hence 
\bel{Fzeta}
\aligned
F_{\zeta_1\zeta_1}^{\rhoh}(u,r_0)  
& =  O\Big(1+\big|\frac{\del_r(\nu- \lambda)}{r_0}\big|+|\del_{rr}\nu|+|\del_{ur}(\nu+\lambda)|\Big), \qquad \mbox{as $r_0\to 0$}.
\endaligned
\ee
Hence, all we have to do is to check that $\del_r(\nu- \lambda)(u,r_0),\ r_0\del_{rr}\nu(u,r_0)$ and $r_0\del_{ur}(\nu+\lambda)(u,r_0)$ tend to $0$ as $r_0\to 0$. 
\ese

\bse
We begin with the term $r_0\del_{ur}(\nu+\lambda)$: In fact, since from Lemma \ref{sec 6 lem R_g apriori}, $\del_{ur}(\nu+\lambda)$ can be continuously extended on $[0,u_0]\times [0,+ \infty)$, we obtain
\be
r_0\del_{ur}(\nu+\lambda)(u,r_0)\to 0,\qquad\mbox{as $r_0\to0$}.
\ee
We next consider the term $\del_r(\nu- \lambda)$: From the first two equations of \eqref{Bondi system diff augmented1.5}, we find 
\be
\Big(e^{\nu- \lambda}\del_r(\nu- \lambda)\Big)(u,0) = \Big(e^{\nu+\lambda}\del_r(\nu+\lambda)\Big)(u,0) = 0,
\ee
so $\del_r(\nu- \lambda)(u,0) = 0$, as claimed.
\ese

\bse
Now, we discuss the term $r_0\del_{rr}\nu(u,r_0)$ and observe that 
\be
\aligned
e^{\nu- \lambda}\del_r\nu& =  \frac{1}{2r}\del_r\big(re^{\nu- \lambda}\big)+\frac{e^{\nu- \lambda}}{2}\del_r(\nu+\lambda)- \frac{e^{\nu- \lambda}}{2r}\\
& =  \frac{e^{\nu+\lambda}}{2r}\Big(1-r^2e^{-2\kappa\rhoh}\big(\Vstar(\rhoh)+8\pi U(\phi)\big)\Big)+re^{\nu- \lambda}\Big(\frac{3}{8}\kappa^{2}|\del_{r}\rhoh|^{2}+\frac{2\pi }{e^{\kappa\rhoh}}|\del_r\phi|^{2}\Big)- \frac{e^{\nu- \lambda}}{2r}
\endaligned
\ee
from the first two equations of \eqref{Bondi system diff augmented1.5}. By a straightforward calculation we have 
\be
\aligned
& \big(r\del_{rr}\nu\big)(u,r_0) 
\\
& = \Big[e^{2\lambda}\del_r\lambda-e^{- \nu+\lambda}\Big(\frac{e^{\nu+\lambda}-e^{\nu- \lambda}}{2r}\Big)-{r \over 2} \del_r\Big(re^{2\lambda}\big(\Vstar(\rhoh)+8\pi U(\phi)\big)\Big)+{r \over 2} \del_{rr}(\nu+\lambda)\Big](u,r_0).
\endaligned
\ee
Since $\rhoh,\phi,\nu,\lambda\in C^1$, and since $\del_{rr}(\nu+\lambda)$ can be continuously extended on $[0,u_0]\times [0,+ \infty)$ as proven in Lemma \ref{sec 6 lem R_g apriori}, we get that 
\bel{r del rr of nu}
r_0\del_{rr}\nu(u,r_0) = O\Big(|\del_r\lambda|(u,r_0)+\big|\frac{e^{\nu+\lambda}-e^{\nu- \lambda}}{r_0}\big|\Big)\quad\mbox{as $r_0\to0$}.
\ee
In the previous arguments, we have shown that 
$\del_r(\nu+\lambda)(u,0) = \del_r(\nu- \lambda)(u,0) = 0$,
which is equivalent to saying 
\be
\del_r\nu(u,0) = \del_r\lambda(u,0) = 0.
\ee
Moreover, combining this result with the second equation of \eqref{Bondi system diff augmented1.5}, we find
\be
\aligned
& \frac{\big(e^{\nu+\lambda}-e^{\nu- \lambda}\big)(u,r_0)}{r_0} 
\\
& = \Big[re^{-2\kappa\rhoh+\nu+\lambda}\big(\Vstar(\rhoh)+8\pi U(\phi)\big)+e^{\nu- \lambda}\del_r(\nu- \lambda)\Big](u,r_0) \to  0,\quad \mbox{as $r_0 \to 0$}.
\endaligned
\ee
Taking these facts into \eqref{r del rr of nu}, we obtain 
$r_0\del_{rr}\nu(u,r_0) \to  0$
as $r_0\to 0$. Together with Lemma~\ref{sec 6 lem R_g main estimate} and \eqref{R and Fzeta}, this yields $R=\Rstar(\rhoh)$ and hence $f'(R)=e^{\kappa\rhoh}$. 
\ese
\end{proof}

	
\section{A first-order formulation of $f(R)$ gravity}
\label{section---3}
 
\subsection{The reduced system} 
\label{system-of-interest}

We now present a first-order reduction of the characteristic initial value problem on $[0,u_0]\times [0,+ \infty)$, and discuss the regularity class of the solutions. Data are prescribed on a light cone labelled $u= 0$, and suitable asymptotic flatness conditions are imposed at spacelike infinity $r \to + \infty$. We prescribe initial data on the hypersurface $\big\{ u= 0 \big\}$ as: 
\be
\aligned
& \phi|_{u= 0} = \phi_0,
\qquad
\rho|_{u= 0} = \rho_0,
\endaligned
\ee
for some given functions $\phi_0$ and $\rho_0$. The metric on the initial slice is completely determined by $\phi_0$ and $\rho_0$, thanks to the two equations \eqref{eq:403}. 

\begin{definition}  
\label{definition}
The \emph{first regularity condition at the center\footnote{In view of \eqref{eq fR components} and \eqref{eq Bondi components T}, this condition can be expressed in terms of first-order derivatives of the unknowns.}} is defined as 
\be
\label{regularity} 
\lim_{r\to0} r^2e^{-\kappa\rho}\big(E - 8\pi T\big)(D,D) = 0.
\ee
The \emph{second regularity condition at the center} is defined as 
\be
\label{regularity2}
\del_r(r\phi) \text{ and }\del_r(r\rho) \text{ are } C^1\text{ on } [0,u_0]\times [0,+ \infty).
\ee
\end{definition}

Moreover, we require that the spacetime is \emph{  asymptotically flat}, that is, 
\be
\lim_{r \to +\infty} (\nu+\lambda) = 0, 
\ee 
which implies 
\be
\lim_{r \to +\infty} (\nu, \lambda) = 0. 
\ee

Under the regularity conditions at the center and the asymptotic flatness condition, we can reduce the field equations 
{  to the following integro-differential system }
with the main unknowns $\phi$ and $\rho$:  
\bel{notresysteme}
\aligned
D(\del_r(r\phi))
& =  \frac{1}{2}e^{\nu+\lambda}\Big(1 - e^{-2\lambda} - r^2e^{-2\kappa\rho}\big(\Vstar(\rho)+8\pi U(\phi)\big)\Big)\del_r\phi
\\
 & \quad + \frac{\kappa r}{2}\big(D\rho \del_r\phi + D\phi \del_r\rho\big)-{r \over 2} e^{\nu+\lambda- \kappa\rho}U'(\phi), 
\\
D\big(\del_r(r\rho)\big)
& =  \frac{1}{2}e^{\nu+\lambda}\Big(1 - e^{-2\lambda} - r^2e^{-2\kappa\rho}\big(\Vstar(\rho)+8\pi U(\phi)\big)\Big)\del_r\rho
\\
& \quad - \frac{8\pi r}{3\kappa} e^{- \kappa\rho}\del_r\phi D\phi- {r \over \kappa} e^{\nu+\lambda}\Big(\Wstar(\rho) - \frac{16\pi}{3}e^{-2\kappa\rho}U(\phi)\Big),
\endaligned
\ee
in which the metric coefficients $\nu, \lambda$ are given by 
\bel{eq:403}
\aligned
\nu + \lambda
& =  - \frac{3}{4}\kappa^{2} \int_r^{+ \infty}s|\del_r\rho|^2ds - 4\pi\int_r^{+ \infty} e^{- \kappa\rho} \, s|\del_r\phi|^2 \, ds,
\\
e^{\nu- \lambda}
& =  {1 \over r}  \int_0^r\Big(1 - s^{2}e^{-2\kappa\rho} \big( \Vstar(\rho) + 8\pi U(\phi) \big)\Big)e^{\nu+\lambda} \, ds. 
\endaligned
\ee
We refer to \eqref{notresysteme} as the \emph{augmented conformal system} of modified gravity.
 
\begin{definition}\label{define C1 type solution}
A pair $(\phi,\,\rho)$ is called an \emph{admissible solution} of the system \eqref{notresysteme} if $(\phi,\rho) \in C^1$ and satisfies the first and second regularity conditions at the center.
\end{definition}


\subsection{First-order version of the augmented conformal system}

{  
To make contact with the dynamical content of the theory and to facilitate characteristic evolution, we now recast the augmented conformal system in a genuinely first-order form. 
The higher-derivative structure of $f(R)$ gravity obscures the propagating degrees of freedom, and a first-order reduction is essential to identify the physical null evolution and the associated constraint hierarchy. 
In the spirit of Bondi-type formulations, we introduce variables adapted to radial transport along outgoing light rays, so that the dynamical subsystem and the metric reconstruction can be clearly separated. 
This reformulation makes the causal propagation of both the scalar field and the additional scalar degree of freedom manifest, and it prepares the system for stable characteristic numerical implementation.  
}

Indeed, inspired by Christodoulou's framework \cite{Chr}, we find it convenient (both for theoretical and numerical applications) to reformulate the system \eqref{notresysteme} by introducing the first-order variables 
\be
\aligned
l:= \del_r(r\rho),
\qquad
h:= \del_r(r\phi).
\endaligned
\ee
Given any function $f=f(u,r)$, we denote by $\overline f$ the function obtained by averaging over the interval $[0,r]$, as follows: 
\be
\label{mean-value-operator}
\overline f(u,r) = \frac{1}{r}\int_0^r f(u,s) \, ds. 
\ee
In view of the regularity conditions $\lim_{r \to 0} \del_r(r\phi) = \lim_{r \to 0} \del_r(r\rho) = 0$, we deduce that 
\be
h = \del_r(r\hb),\qquad l= \del_r(r\lb),
 \qquad  \del_r \hb= \frac{h- \hb}{r},\qquad \del_r \lb= \frac{l- \lb}{r},
\ee
as well as
\be
\hb= \phi, \qquad \quad \lb= \rho.
\ee
It is also convenient to define
\be
g:= \Big(1-r^2e^{-2\kappa\lb}\big(\Vstar(\lb)+8\pi U(\hb)\big)\Big)e^{\nu+\lambda},
\ee
so that, by \eqref{eq:403}, 
\be
\gb=e^{\nu- \lambda}. 
\ee

With this notation, from \eqref{notresysteme}-\eqref{eq:403} we obtain 
\be
\aligned
\label{Bondi system int augmented2a}
& Dh = \frac{1}{2r}\big(g- \gb\big)(h- \hb) +
\frac{\kappa}{2}\big(D\lb(h- \hb) + D\hb(l- \lb)\big)-{r \over 2} e^{\nu+\lambda- \kappa\lb}U'(\hb),
\\
& \kappa Dl = \frac{\kappa }{2r}\big(g- \gb\big)(l- \lb)- \frac{8\pi}{3}e^{- \kappa\lb}(h- \hb) D\hb - re^{\nu+\lambda}\Big(\Wstar(\lb) - \frac{16\pi}{3}e^{-2\kappa\lb}U(\hb)\Big), 
\endaligned
\ee
together with
\be
\aligned 
& \nu + \lambda = - \frac{3}{4} \kappa^2 \int_r^{+ \infty}\frac{|l- \lb|^2}{s} \, ds - 4\pi\int_r^{+ \infty}\frac{|h- \hb|^2}{se^{\kappa\lb}} \, ds.
\endaligned
\ee 
It remains to express $D \hb$ and $D \lb$ arising in \eqref{Bondi system int augmented2a} in terms of $l$ and $h$ and lower-order contributions. We introduce two auxiliary variables 
\be
\label{variables-pq}
p:= D\hb, \qquad 
q:= D\lb, 
\end{equation}
which, in view of the regularity conditions, satisfy
\be
\lim_{r \to 0} (rp) = 0,
\qquad
\lim_{r \to 0} (rq) = 0.
\ee
We then arrive at one of the conclusions announced in Theorem \ref{main-theo}. 

\begin{proposition}[A first-order formulation of $f(R)$ gravity]
\label{main-propo-first-order}
\bse\label{Bondi system int augmented3} 
The equations of $f(R)$ gravity take the form of a first-order system, in which $h$ and $l$ are the main unknowns: 
\be
\aligned
& Dh = \frac{1}{2r}\big(g- \gb\big)(h- \hb) +
\frac{\kappa}{2}\big(q(h- \hb) + p(l- \lb)\big)-{r \over 2} e^{\nu+\lambda- \kappa\lb}U'(\hb),
\\
& Dl = \frac{1}{2r}\big(g- \gb\big)(l- \lb)
- \frac{8\pi}{3\kappa}e^{- \kappa\lb}(h- \hb) p-{r \over \kappa} e^{\nu+\lambda}\Big(\Wstar(\lb)- \frac{16\pi}{3}e^{-2\kappa\lb}U(\hb)\Big),
\endaligned
\ee
in which the metric coefficient $\nu + \lambda$ is computed by 
\be
\aligned 
& \nu + \lambda = - \frac{3\kappa^{2}}{4}\int_r^{+ \infty}\frac{|l- \lb|^2}{s} \, ds - 4\pi\int_r^{+ \infty}\frac{|h- \hb|^2}{se^{\kappa\lb}} \, ds,
\endaligned
\ee
and the auxiliary variables $p$ and $q$ are defined by the following integral expressions:  
\bel{p,q}
\aligned 
&p = \frac{\kappa}{2r}\int_0^r\Big((l- \lb)p + (h- \hb)q\Big) \, ds +\frac{1}{2r}\int_0^r\frac{e^{\nu- \lambda}(h- \hb)}{s} \, ds
- \frac{1}{2r}\int_0^rse^{\nu+\lambda- \kappa\lb}U'(\hb) \, ds, 
\\
& \kappa q 
= - \frac{8\pi}{3 r}\int_0^r\frac{(h- \hb)p}{e^{\kappa\lb}} \, ds+ \frac{\kappa}{2r}\int_0^r{\frac{e^{\nu- \lambda}(l- \lb)}{s} \, ds}
 - \frac{1}{r}\int_0^r se^{\nu+\lambda}\Big (\Wstar(\lb)-
\frac{16\pi}{3} e^{-2\kappa\lb}U(\hb)\Big) \, ds. 
\endaligned
\ee
\ese
\end{proposition}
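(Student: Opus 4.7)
The strategy is to translate the augmented conformal system \eqref{notresysteme}--\eqref{eq:403} into the $(h,l)$-variables and then to derive the closed integral expressions for $p=D\hb$ and $q=D\lb$ by commuting the operator $D$ with the radial averaging operator defined in \eqref{mean-value-operator}. The first two displayed equations of the proposition follow from direct substitution; the nontrivial point is the derivation of \eqref{p,q}, which rests on two consecutive integrations by parts combined with the center-regularity conditions of Definition~\ref{definition}.

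First, I insert $\hb=\phi$, $\lb=\rho$, $\del_r\phi=(h-\hb)/r$, $\del_r\rho=(l-\lb)/r$, and $\del_r\gb=(g-\gb)/r$ (the last being a consequence of the definition of $g$ and of $\gb=e^{\nu-\lambda}$, which itself comes from the second equation of \eqref{eq:403}) into the two evolution equations of \eqref{notresysteme}. The quantities $D\phi$ and $D\rho$ appearing on the right-hand sides are exactly $p$ and $q$, yielding the stated equations for $Dh$ and $Dl$. The integral expression for $\nu+\lambda$ then follows from the first equation of \eqref{eq:403} after noting $r|\del_r\phi|^2=|h-\hb|^2/r$ and $r|\del_r\rho|^2=|l-\lb|^2/r$.

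Next, to derive \eqref{p,q}, I start from $r\hb=\int_0^r h(u,s)\,ds$ and apply $D=\del_u-(\gb/2)\del_r$ to both sides. Using $Dr=-\gb/2$ and $\del_r\bigl(\int_0^r h\,ds\bigr)=h$, I obtain
\be
rp = \int_0^r \del_u h\,ds + \frac{\gb\,(\hb - h)}{2}.
\ee
I then split $\del_u h = Dh + \tfrac{1}{2}e^{\nu-\lambda}(u,s)\,\del_s h$ under the integral and substitute the evolution equation for $Dh$ just obtained. A first integration by parts on the $\del_s h$ term produces a $-\int_0^r (g-\gb)h/(2s)\,ds$ contribution (using $\del_s\gb=(g-\gb)/s$), which combined with the $(g-\gb)(h-\hb)/(2s)$ piece inherited from $Dh$ collapses to $-\int_0^r \hb(g-\gb)/(2s)\,ds$. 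A second integration by parts on the latter, again via $(g-\gb)/s=\del_s\gb$, transfers the derivative onto $\hb$ and yields $\int_0^r \gb(h-\hb)/(2s)\,ds$ plus boundary contributions. All boundary terms at $s=r$ cancel against the $\gb(\hb-h)/2$ in the starting identity, while those at $s=0$ cancel because the regularity condition \eqref{regularity2} forces $\hb(u,0)=h(u,0)=\phi(u,0)$. After substituting $\gb=e^{\nu-\lambda}$, this is exactly the first formula of \eqref{p,q}. The formula for $q$ is obtained by the same argument applied to $r\lb=\int_0^r l\,ds$ together with the evolution equation for $Dl$; the potential $U'(\hb)$ is replaced by the $\Wstar$--$U$ combination inherited from the second equation of \eqref{notresysteme}, and the extra coupling $-(8\pi/3\kappa)e^{-\kappa\lb}(h-\hb)p$ accounts for the corresponding integral term.

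The main obstacle is the careful bookkeeping of these two integrations by parts and the precise structure of boundary terms at the center. The cancellations at $s=0$ hinge on the regularity condition \eqref{regularity2}, without which $\hb(u,0)\neq h(u,0)$ in general and residual boundary contributions would spoil the clean integral formulas. Once \eqref{p,q} is established, the asymptotics $\lim_{r\to 0}(rp)=\lim_{r\to 0}(rq)=0$ are immediate from the vanishing of the integrands near $s=0$, completing the first-order reduction announced in Theorem~\ref{main-theo}.
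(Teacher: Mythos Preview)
Your argument is correct, but it follows a different route from the paper. The paper proceeds more directly: it computes the pointwise identity
\[
\del_r(rp) \;=\; r\,\del_r D\hb + D\hb \;=\; Dh - \tfrac{1}{2}\del_r\gb\,(h-\hb) + \tfrac{1}{2r}\gb\,(h-\hb),
\]
substitutes the evolution equation for $Dh$, and observes that the $(g-\gb)$ terms cancel \emph{pointwise}, leaving the first-order ODE $\del_r(rp)=\tfrac{\kappa}{2}\bigl((l-\lb)p+(h-\hb)q\bigr)+\tfrac{1}{2r}e^{\nu-\lambda}(h-\hb)-\tfrac{r}{2}e^{\nu+\lambda-\kappa\lb}U'(\hb)$, which is then integrated once from $0$ to $r$ using $\lim_{r\to 0}(rp)=0$. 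Your approach instead applies $D$ to the integrated identity $r\hb=\int_0^r h\,ds$, pulls $\del_u$ under the integral, and then needs two successive integrations by parts to reduce the resulting expression to \eqref{p,q}, tracking boundary terms at both endpoints. Both routes arrive at the same formula, and your bookkeeping of the cancellations (in particular the use of $h(u,0)=\hb(u,0)$ from \eqref{regularity2}) is sound. The paper's route is shorter and avoids boundary-term management altogether, since the cancellation of the $(g-\gb)$ contributions happens before any integration; your route, on the other hand, makes the role of the averaging operator and of the center conditions more explicit.
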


From the proposed formulation, by taking $\kappa= 0$ and $U=0$, we recover the formulation of the Einstein-massless scalar field system discovered by Christodoulou \cite{Chr, Chr2}.

\begin{proof}  
\bse
Using the identity 
\be
r\del_r\Big(- \frac{1}{2}e^{\nu- \lambda}\del_r\phi\Big) = - \frac{1}{2}r\del_r\big(e^{\nu- \lambda}\big)\del_r\phi
- \frac{1}{2}re^{\nu- \lambda}\del_{rr}\phi,
\ee
we deduce 
\be
r\del_r D\hb + D\hb = Dh - \frac{1}{2}\del_r\big(e^{\nu- \lambda}\big)(h- \hb) + \frac{1}{2r}e^{\nu- \lambda}(h- \hb).
\ee
This, together with the first equation in \eqref{Bondi system int augmented2a}, yields 
\bel{Bondi augmentation2 proof 0} 
\del_r D\hb +\frac{1}{r}D\hb = \frac{\kappa}{2r}\big(D\lb(h- \hb) + D\hb(l- \lb)\big) + \frac{1}{2r^2}e^{\nu- \lambda}(h- \hb)- \frac{1}{2}e^{\nu+\lambda- \kappa\lb}U'(\hb).
\ee
Similarly, we derive 
\be
\del_rD\lb + \frac{1}{r}D\lb =   \frac{8\pi}{3\kappa r}e^{- \kappa\lb}(h- \hb)D\hb
+ \frac{1}{2r^2}e^{\nu- \lambda}(l- \lb)
- \frac{e^{\nu+\lambda}}{\kappa}\Big(\Wstar(\lb)- \frac{16\pi}{3}e^{-2\kappa\lb}U(\hb)\Big). 
\ee
Consequently, the system \eqref{Bondi system int augmented2a} is equivalent to  
\be
\label{Bondi-system-new}
\aligned
& \del_r(rp) - \frac{\kappa}{2}\Big((l- \lb)p +(h- \hb)q\Big) = \frac{1}{2r}e^{\nu- \lambda}(h- \hb)-{r \over 2} e^{\nu+\lambda- \kappa\lb}U'(\hb),
\\
& \del_r(rq) - \frac{8\pi(h- \hb)}{3\kappa e^{\kappa\lb}}p= \frac{e^{\nu- \lambda}}{2r}(l- \lb) - \frac{re^{\nu+\lambda}}{\kappa}\Big(\Wstar(\lb) - \frac{16\pi}{3} e^{-2\kappa\lb}U(\hb)\Big), 
\endaligned
\end{equation}
and we conclude by integrating these equations. 
\ese
\end{proof} 


\subsection{Consistency and regularity properties}

We now discuss the regularity of solutions to the first-order augmented conformal system \eqref{Bondi system int augmented3}. In particular, we establish the consistency of the regularity conditions, showing that the auxiliary variables coincide with the corresponding directional derivatives. Conversely, the integral definitions of the auxiliaries propagate these identities once they hold at the center. 

\begin{lemma}
Let $(h,l,\lambda,\nu,p,q)$ be a classical solution of \eqref{Bondi system int augmented3} defined on $[0,u_0]\times [0,+ \infty)$. Then, one has 
\be
\aligned
p & = D\hb,
\qquad
q  = D\lb.
\endaligned
\ee
\end{lemma}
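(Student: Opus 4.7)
The plan is to show that $p$ coincides with $\tilde p := D\hb$ and $q$ with $\tilde q := D\lb$ by verifying that each pair satisfies the same explicit first-order ODE in $r$ with matching boundary values at the center; uniqueness of the resulting integral representation then forces equality. Conceptually, the argument runs the derivation of \eqref{p,q} from \eqref{Bondi-system-new} in reverse.

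First, I would differentiate the integral formulas in \eqref{p,q} with respect to $r$ using the fundamental theorem of calculus, which recovers the differential form
\be
\aligned
\del_r(rp) &= \tfrac{\kappa}{2}\bigl((l-\lb)p + (h-\hb)q\bigr) + \tfrac{1}{2r}e^{\nu-\lambda}(h-\hb) - \tfrac{r}{2}e^{\nu+\lambda-\kappa\lb}U'(\hb),\\
\del_r(\kappa rq) &= -\tfrac{8\pi}{3}e^{-\kappa\lb}(h-\hb)p + \tfrac{\kappa}{2r}e^{\nu-\lambda}(l-\lb) - re^{\nu+\lambda}\Bigl(\Wstar(\lb) - \tfrac{16\pi}{3}e^{-2\kappa\lb}U(\hb)\Bigr),
\endaligned
\ee
while $rp$ and $\kappa rq$ vanish at $r=0$ by construction.

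Next, I would derive the same equations with $\tilde p$ and $\tilde q$ in place of $p$ and $q$ on the left. From $h = \del_r(r\hb)$ one gets $\del_r\hb = (h-\hb)/r$ and $r\hb = \int_0^r h\,ds$; combined with $D = \del_u - \tfrac{1}{2}e^{\nu-\lambda}\del_r$, this yields
\be
\del_r(r\tilde p) = Dh - \tfrac{1}{2}\del_r\bigl(e^{\nu-\lambda}\bigr)(h-\hb) + \tfrac{1}{2r}e^{\nu-\lambda}(h-\hb),
\ee
exactly as in the derivation of \eqref{Bondi augmentation2 proof 0}. Using $\del_r(e^{\nu-\lambda}) = (g-\gb)/r$ (which follows from $g = \del_r(r\gb)$ with $\gb = e^{\nu-\lambda}$) and substituting the evolution equation for $h$ from \eqref{Bondi system int augmented3}, the terms involving $(g-\gb)(h-\hb)/(2r)$ cancel and one is left with precisely the ODE for $\del_r(rp)$ above---with the same $p, q$ appearing on its right-hand side. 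An entirely parallel computation using the evolution equation for $l$ produces the matching ODE for $\kappa r\tilde q$.

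Finally, the regularity conditions of Definition \ref{definition} ensure that $h-\hb$ and $l-\lb$ vanish as $r\to 0$ and that $\del_u h$ and $\del_u l$ stay bounded at the center, so $r\tilde p = \int_0^r \del_u h\,ds - \tfrac{1}{2}e^{\nu-\lambda}(h-\hb)$ and $r\tilde q$ both tend to $0$ as $r\to 0$. Since the right-hand sides of the ODEs depend only on the already-fixed quantities $(h, l, p, q, \nu, \lambda, \hb, \lb)$ and not on $\tilde p$ or $\tilde q$, both $rp$ and $r\tilde p$ are represented by the very same integral $\int_0^r(\cdot)\,ds$ and hence coincide; the same argument applies to $q$ and $\tilde q$. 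The main technical point to watch is the algebraic cancellation $\tfrac{1}{2r}(g-\gb)(h-\hb) - \tfrac{1}{2}\del_r(e^{\nu-\lambda})(h-\hb) = 0$, which is precisely what produces the clean source $\tfrac{1}{2r}e^{\nu-\lambda}(h-\hb)$ appearing in the final ODE.
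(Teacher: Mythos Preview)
Your proof is correct and follows essentially the same approach as the paper: derive $\del_r(r\tilde p)$ from the evolution equation for $h$ in \eqref{Bondi system int augmented3} (using the cancellation $\tfrac{1}{2r}(g-\gb)(h-\hb)-\tfrac{1}{2}\del_r(e^{\nu-\lambda})(h-\hb)=0$), observe that the resulting right-hand side involves $p,q$ rather than $\tilde p,\tilde q$ so that $\del_r(r\tilde p)=\del_r(rp)$, and conclude from the vanishing of $rp$ and $r\tilde p$ at the center. The paper's write-up is more compressed but the argument is the same.
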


\begin{proof}  
Under the regularity conditions, \eqref{Bondi augmentation2 proof 0} holds. Then, we have 
\be
r\del_r D\hb + D\hb = \frac{\kappa}{2}\big(q(h- \hb) + p(l- \lb)\big) + \frac{1}{2r}e^{\nu- \lambda}(h- \hb) 
- \frac{1}{2}r e^{\nu+\lambda- \kappa \lb} U'(\hb) .
\ee
{ 
Comparing this identity with the first identity in \eqref{p,q} and observing that (trivially) 
$\del_r\big(r D \hb \big) = \del_r(rp)$, we deduce that, by regularity, $ r D \hb - r p = c$, where $c$ is constant.} Then letting $r\to 0$ we conclude that
$c = 0$. For $q = D \lb$, a similar argument of proof applies.
\end{proof}


Suppose now that $h$ and $l$ are solutions to the system \eqref{Bondi system int augmented3} defined on $[0,u_0]\times [0,+ \infty)$. If 
\bel{center.condition.2}
h,l\in C^1\big([0,u_0]\times [0,+ \infty)\big),
\ee
then the first and second regularity conditions at the center hold. In fact, it follows from the system \eqref{Bondi system int augmented3} that
\be
\label{more-conditions-center}
\hb,\,\lb,\,\nu,\,\lambda\in C^1\big([0,u_0]\times [0,+ \infty)\big).
\ee
In view of \eqref{eq fR components} and \eqref{eq Bondi components T}, we get
\be
\aligned
& r^2{e^{-\kappa\lb}}\big(E-8\pi T\big)(D,D) 
\\
& =  r{e^{\nu- \lambda}}\Big(\frac{e^{\nu- \lambda}}{2}\del_r(\nu+\lambda) - 2\del_u\lambda\Big)
- \frac{3r^2\kappa^{2}{ } }{2}(D\lb)^2-8\pi r^2{ }(D\hb)^2.
\endaligned
\ee
Hence, from the regularity conditions, and taking into account \eqref{more-conditions-center}, we deduce that
\be
\lim_{r\to0}{\Big(r^2e^{-\kappa\rho}\big(E-8\pi T\big)(D,D)\Big)}= 0, 
\ee
since $\nu,\lambda=O(r)$ and $D\hb,D\lb$ are bounded near $r=0$.  For the second regularity condition at the center, we compute  
\be
r\del_r\lb = l- \lb, \qquad  r\del_r\hb = h- \hb.
\ee  
 This shows that $h=\partial_r(r\phi)$ and $l=\partial_r(r\rho)$ are $C^1$ on $[0,u_0]\times[0,+\infty)$, as claimed.  Conversely, if the two center regularity conditions hold and $(h,l)\in C^0$, the right-hand sides of \eqref{Bondi system int augmented3} are continuous; integrating along characteristics then yields $(h,l)\in C^1$, so the two formulations are equivalent in the admissible class. 
 

\section{Hawking mass and monotonicity properties} 
\label{section--- 4}

\subsection{Hawking mass}

An important difference between our system \eqref{Bondi system int augmented3} and the one proposed by Christodoulou \cite{Chr} is that, in our setting, there is no \emph{ a priori} guarantee that the right-hand side of the representation for $e^{\nu- \lambda}$ in \eqref{eq:403} is non-negative. In order to control the sign of that expression and, in fact, to establish that its right-hand side remains globally positive if it is initially positive, we need some properties about the mass distribution of the spacetime. We thus consider the \emph{Hawking mass} $m =m(u,r)$, defined by 
\be
\label{mass-definition}
m:={r \over 2}  \Big( 1- \gt(\nablat r,\nablat r) \Big) 
= {r \over 2} \big(1-e^{-2\lambda}\big). 
\ee 

\begin{proposition}[Positivity of the Hawking mass in $f(R)$ gravity]
\label{prop mass positive}
Except in the case of Minkowski spacetime (i.e.\ when $m\equiv 0$), the mass function is \emph{positive} for all $r>0$ throughout the spacetime, and one has
\be
\label{positivity-of-mass}
1 - {2m \over r} = e^{-2\lambda}
= {1 \over r e^{\nu+\lambda}} 
\int_0^r \Big(
1 - s^2 e^{-2\kappa\lb} \big( \Vstar(\rho) + 8 \pi U(\phi) \big) 
\Big) 
\, e^{\nu+\lambda} \, ds
\in (0,1). 
\ee 
Equality $1-\tfrac{2m}{r}=1$ for all $r$ holds if and only if the spacetime is Minkowski. 
\end{proposition}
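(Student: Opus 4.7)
The proof splits cleanly into an algebraic identity followed by a monotonicity argument.

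First, the identity $1-\tfrac{2m}{r}=e^{-2\lambda}$ is immediate from the definition \eqref{mass-definition}, and the explicit integral representation then follows by writing $e^{-2\lambda}=e^{\nu-\lambda}\cdot e^{-(\nu+\lambda)}$ and substituting the second identity of \eqref{eq:403} into the numerator. No new ingredient is needed beyond the representations of $\nu+\lambda$ and $\nu-\lambda$ already derived.

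Second, to prove $m\geq 0$ I would introduce the auxiliary function
\be
\Psi(u,r):= r\,e^{(\nu+\lambda)(u,r)} - \int_{0}^{r}\Big[1 - s^{2} e^{-2\kappa\rho}\big(\Vstar(\rho)+8\pi U(\phi)\big)\Big]\,e^{(\nu+\lambda)(u,s)}\,ds,
\ee
which, by the identity of Step~1, is nothing other than $2m(u,r)\,e^{(\nu+\lambda)(u,r)}$ and vanishes at $r=0$. Differentiating in $r$ and inserting the relation $\del_r(\nu+\lambda) = \tfrac{3}{4}\kappa^{2}r|\del_r\rho|^{2} + 4\pi r e^{-\kappa\rho}|\del_r\phi|^{2}$ from the first equation of \eqref{equa-three-equa}, the constant ``$1$'' in the integrand of $\Psi$ cancels with the boundary contribution and leaves
\be
\del_r\Psi = r^{2} e^{\nu+\lambda}\Big[\tfrac{3}{4}\kappa^{2}|\del_r\rho|^{2} + 4\pi e^{-\kappa\rho}|\del_r\phi|^{2} + e^{-2\kappa\rho}\big(\Vstar(\rho) + 8\pi U(\phi)\big)\Big].
\ee
Each of the four summands is non-negative: the first two trivially, the third because hypothesis (4) rewrites as $2\Vstar(\rho) = \Rstar(\rho)f'(\Rstar(\rho)) - f(\Rstar(\rho)) \geq 0$, and the fourth by hypothesis (2). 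Integration from $r=0$ then gives $\Psi\geq 0$, hence $m\geq 0$ and $e^{-2\lambda}\leq 1$.

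Third, the strict inequalities and the Minkowski characterization follow from the equality case. The lower bound $e^{-2\lambda}>0$ is automatic from the finiteness of the $C^{1}$ function $\lambda$. For the strict upper bound $e^{-2\lambda}<1$, any failure at some interior $(u_0,r_0)$ forces $\Psi(u_0,r_0)=0$; by the radial monotonicity just established and $\Psi(u_0,0)=0$ this in fact yields $\Psi(u_0,\cdot)\equiv 0$ on $[0,r_0]$, and then the vanishing of each of the four non-negative summands in $\del_r\Psi$ on this interval gives $\del_r\phi\equiv 0$, $\del_r\rho\equiv 0$, $\Vstar(\rho)\equiv 0$ and $U(\phi)\equiv 0$ there. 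Together with the asymptotic condition $(\nu+\lambda)(u,\infty)=0$, this is exactly the Minkowski data on the corresponding portion of the slice, and the ``if and only if'' at the end of the statement is the same computation read in the reverse direction.

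The main obstacle is conceptual rather than technical: the monotonicity argument above is a radial statement on each fixed $u$--slice, so the degeneracy it detects (vanishing of $\Psi$ on an initial radial interval) has to be promoted to a genuinely spacetime-wide Minkowski conclusion. This is resolved either by interpreting ``Minkowski'' in the slice-wise sense in which $(\del_r\phi,\del_r\rho,\Vstar(\rho),U(\phi))$ all vanish, or by invoking a uniqueness/propagation argument for the characteristic problem attached to \eqref{notresysteme}, propagating trivial radial data forward along the null flow. Once this interpretation is fixed, all remaining steps reduce to the sign computation sketched above.
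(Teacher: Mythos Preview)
Your proof is correct and reaches the same conclusion, but the mechanism differs from the paper's. The paper argues as follows: since $\del_r(\nu+\lambda)\geq 0$ by the first equation of \eqref{equa-three-equa}, the function $e^{\nu+\lambda}$ is non-decreasing in $r$; combining this with $\Vstar\geq 0$ and $U\geq 0$ gives, via the second equation of \eqref{eq:403},
\[
e^{\nu-\lambda}=\frac{1}{r}\int_0^r\Big(1-s^2e^{-2\kappa\rho}\big(\Vstar(\rho)+8\pi U(\phi)\big)\Big)e^{\nu+\lambda}\,ds\ \leq\ \frac{1}{r}\int_0^r e^{\nu+\lambda}\,ds\ \leq\ e^{\nu+\lambda},
\]
the last step being the elementary inequality ``average of a non-decreasing function $\leq$ endpoint value''. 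Your route instead differentiates the auxiliary $\Psi=2m\,e^{\nu+\lambda}$ directly and obtains a manifestly non-negative $\del_r\Psi$; this is essentially the content of the paper's \emph{next} proposition (the formula \eqref{mass-r-derivative} for $\del_r m$) folded into the positivity argument. What your approach buys is an explicit integrand whose vanishing characterizes the equality case term by term, making the Minkowski discussion cleaner; what the paper's approach buys is brevity, since the averaging trick avoids computing $\del_r\Psi$ at all. Your closing caveat about promoting a slice-wise degeneracy to a spacetime-wide Minkowski conclusion is well taken and in fact applies equally to the paper's terse treatment of the ``unless'' clause.
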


\begin{proof} 
\bse
Clearly, from \eqref{mass-definition} we see that $m < r/2$. For the mass to be positive we must prove that 
$
e^{\nu- \lambda} <  e^{\nu+\lambda}.
$
To do so, we observe that, by the first equation in \eqref{notresysteme}, 
\be
\del_r(e^{\nu+\lambda}) = e^{\nu+\lambda}\del_r(\nu+\lambda) \geq 0,
\ee
unless $\rho$ and $\phi$ are both constant in $r$. 
So that $e^{\nu+\lambda}$ is monotonically non-decreasing with respect to $r$. 
Since $\Vstar$ and $U$ are non-negative, the following inequality holds: 
\be
e^{\nu- \lambda} = \frac{1}{r}\int_0^r \big(1 - s^2e^{\kappa\rho}(\Vstar(\rho)+8\pi U(\phi)\big)e^{\nu+\lambda} ds 
\leq \frac{1}{r}\int_0^r e^{\nu+\lambda} \, ds \leq e^{\nu+\lambda},
\ee
where the first equality follows from the second equation in \eqref{notresysteme}.
This gives \eqref{positivity-of-mass} and guarantees the positivity of the mass.
\ese
M
\end{proof}


\begin{proposition}[Monotonicity of the Hawking mass in $f(R)$ gravity. I]
\label{prop increasing m along r}
The mass is a non-decreasing function in radial directions:
\be
\label{mass-r-derivative}
\del_r m
= \frac{r^2e^{-2\kappa\lb}}{2}(\Vstar(\lb)+8\pi U(\hb)) + \frac{e^{-2\lambda}}{2}\Big( \frac{3\kappa^2}{4}|l- \lb|^2 + \frac{4\pi}{e^{\kappa\lb}} |h- \hb|^2\Big)
\geq 0. 
\ee
\end{proposition}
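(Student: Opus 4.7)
The plan is to differentiate the definition $m = \tfrac{r}{2}(1 - e^{-2\lambda})$ directly and substitute for $\partial_r \lambda$ using the two constraint equations in \eqref{equa-three-equa}. Differentiation gives
\[
\partial_r m = \tfrac{1}{2}\bigl(1 - e^{-2\lambda}\bigr) + r e^{-2\lambda}\, \partial_r \lambda,
\]
so the task reduces to finding a clean expression for $r\,\partial_r \lambda$. The first equation in \eqref{equa-three-equa} already supplies $r\,\partial_r(\nu+\lambda)$ directly. For the second, I would expand the left-hand side as $e^{\nu-\lambda}\bigl(1 + r\,\partial_r(\nu-\lambda)\bigr)$ and divide by $e^{\nu-\lambda}$, obtaining
\[
r\,\partial_r(\nu-\lambda) \;=\; \bigl(1 - r^2 e^{-2\kappa\rho}(\Vstar(\rho) + 8\pi U(\phi))\bigr)\, e^{2\lambda} - 1.
\]
Subtracting the two identities yields $2r\,\partial_r \lambda$ as the sum of the positive source terms from \eqref{equa-three-equa} minus the preceding expression.

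Inserting this into the formula for $\partial_r m$, the pieces $\tfrac12 - \tfrac{e^{-2\lambda}}{2}$, $-\tfrac12$, and $+\tfrac{e^{-2\lambda}}{2}$ cancel telescopically, and only two nonnegative contributions remain: one proportional to $e^{-2\lambda}$ carrying the $(\partial_r\rho)^2$ and $(\partial_r\phi)^2$ terms, and one proportional to $r^2 e^{-2\kappa\rho}$ carrying the potentials $\Vstar$ and $U$. It then remains to translate back into the first-order variables of Section~\ref{section---3}: from $\bar l = \rho$, $\bar h = \phi$ together with the identity $\partial_r(r\bar f) = f$, one gets $l - \bar l = r\,\partial_r\rho$ and $h - \bar h = r\,\partial_r\phi$, so $r^2|\partial_r\rho|^2 = |l-\bar l|^2$ and $r^2|\partial_r\phi|^2 = |h - \bar h|^2$. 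This puts $\partial_r m$ into exactly the form \eqref{mass-r-derivative}.

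Non-negativity is then immediate term by term: the squared-difference pieces are manifestly nonnegative, $U(\phi) \geq 0$ by hypothesis~(2) in \eqref{hypo-Einstein}, and by hypothesis~(4) together with the definition \eqref{eq:pot}, $\Vstar(\rho) = \tfrac12\bigl(R\, f'(R) - f(R)\bigr) \geq 0$. There is no genuine obstacle in the argument; it is purely algebraic. The only delicate point is bookkeeping of the $e^{2\lambda}$ factors and signs when eliminating $r\,\partial_r(\nu-\lambda)$, since a slip there would either spoil the cancellations or misallocate terms between the $(l-\bar l, h-\bar h)$ block and the $(\Vstar, U)$ block.
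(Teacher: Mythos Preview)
Your proposal is correct and follows essentially the same approach as the paper: differentiate $m=\tfrac{r}{2}(1-e^{-2\lambda})$ and substitute the two constraint equations \eqref{equa-three-equa}. The only cosmetic difference is that the paper writes $e^{-2\lambda}=e^{\nu-\lambda}/e^{\nu+\lambda}$ and applies the quotient rule so that $\partial_r(re^{\nu-\lambda})$ appears directly, whereas you expand that derivative, solve for $r\,\partial_r(\nu-\lambda)$, and subtract to isolate $r\,\partial_r\lambda$; the algebra and the use of hypotheses~(2) and~(4) for non-negativity are identical.
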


\begin{proof}  From the definition of $m$, we compute 
\be
\aligned
\del_r m & = \frac{1}{2}\Big(1- \frac{e^{\nu- \lambda}}{e^{\nu+\lambda}}\Big)
- {r \over 2} \del_r\Big(\frac{e^{\nu- \lambda}}{e^{\nu+\lambda}}\Big)
\\ 
& = \frac{1}{2} - \frac{1}{2e^{\nu+\lambda}}\del_r\big(re^{\nu- \lambda}\big)
+ \frac{re^{\nu- \lambda}}{2e^{\nu+\lambda}}\del_r(\nu+\lambda).
\endaligned
\ee
Then, by substituting the first and second equation of \eqref{notresysteme} and using the definitions of the previous section, we reach the desired identity. 
\end{proof}


Let us next consider the characteristic curves. By inspecting the system \eqref{Bondi system int augmented3}, we observe that the two evolution equations in the system share the same characteristic speed. Let $\chi=\chi(u,r_0)$ be the characteristic curve with initial data $r_0$, that is, 
\bel{eq_chara1}
\aligned
& \frac{d\chi}{du} = - \frac{1}{2}e^{\nu- \lambda},
\qquad
\quad
\chi(0) = r_0.
\endaligned
\ee
Consider the evolution of the mass along a characteristic curve. 

\begin{proposition}[Monotonicity of the Hawking mass in $f(R)$ gravity. II]
\label{prop decreasing of m along D}
The mass is a non-increasing function along the incoming light rays towards the future:
\be
\label{mass-D-derivative}
Dm = - \frac{r^2e^{- \nu- \lambda}}{4}\Big(3\kappa^2  |D\rho|^2 
          + 16\pi  e^{- \kappa\rho} |D\phi|^2 
          + (\Vstar(\rho) +8\pi U(\phi))e^{2\nu-2\kappa \rho}\Big) \leq 0.
\ee
\end{proposition}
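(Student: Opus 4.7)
The plan is to compute $Dm$ directly from the definition $m=\tfrac{r}{2}(1-e^{-2\lambda})$, by writing $D=\partial_u-\tfrac{1}{2}e^{\nu-\lambda}\partial_r$ and handling the two resulting contributions separately. The radial derivative $\partial_r m$ is already available from Proposition~\ref{prop increasing m along r}; after using $l-\lb=r\partial_r\rho$ and $h-\hb=r\partial_r\phi$ (consequences of the regularity at the center) and comparing with the first equation in \eqref{equa-three-equa}, the kinetic part of \eqref{mass-r-derivative} reorganizes into $\tfrac{r e^{-2\lambda}}{2}\partial_r(\nu+\lambda)$, giving the compact form
\begin{equation}
\partial_r m=\frac{r^2 e^{-2\kappa\rho}}{2}\bigl(\Vstar(\rho)+8\pi U(\phi)\bigr)+\frac{r e^{-2\lambda}}{2}\partial_r(\nu+\lambda).
\end{equation}

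For the time derivative, I would use the $(n,n)$-component of the field equations $E_{nn}=8\pi T_{nn}$, read off from \eqref{eq fR components} and \eqref{eq Bondi components T}, and solve algebraically for $\partial_u\lambda$ in terms of $\partial_r(\nu+\lambda)$, $(D\rho)^2$ and $(D\phi)^2$. The schematic structure is $\partial_u\lambda=\tfrac{1}{4}e^{\nu-\lambda}\partial_r(\nu+\lambda)+\text{(quadratic in }D\rho,D\phi\text{)}$, so that $\partial_u m=re^{-2\lambda}\partial_u\lambda$ picks up a term of the form $\tfrac{r e^{\nu-3\lambda}}{4}\partial_r(\nu+\lambda)$. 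The rest of $\partial_u\lambda$ contributes exactly the $(D\rho)^2$ and $(D\phi)^2$ pieces appearing in \eqref{mass-D-derivative}.

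The decisive point is a cancellation: the $\partial_r(\nu+\lambda)$ contribution in $\partial_u m$ matches, with opposite sign, the contribution coming from $-\tfrac{1}{2}e^{\nu-\lambda}\partial_r m$. What remains assembles, after factoring $-\tfrac{r^2 e^{-\nu-\lambda}}{4}$, into the stated identity \eqref{mass-D-derivative}. Non-positivity is then immediate: the terms $3\kappa^2(D\rho)^2$ and $16\pi e^{-\kappa\rho}(D\phi)^2$ are manifestly non-negative, while the potential contribution is controlled by assumption~(2) in \eqref{hypo-Einstein} and the non-negativity of $\Vstar(\rho)$, which by the definition \eqref{eq:pot} is precisely equivalent to assumption~(4).

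The main obstacle is checking this precise algebraic cancellation. It is essential that the coefficient of $\partial_r(\nu+\lambda)$ produced by solving $E_{nn}=8\pi T_{nn}$ for $\partial_u\lambda$ is exactly $\tfrac{1}{4}e^{\nu-\lambda}$, so that it pairs with the companion coefficient $\tfrac{1}{2}r e^{-2\lambda}$ delivered by Proposition~\ref{prop increasing m along r}. Once this matching is verified, the $(D\rho)^2$ and $(D\phi)^2$ terms already present in $E_{nn}$ pass through unchanged into $Dm$, so no additional identity from the $f(R)$ structure beyond $\Vstar\geq 0$ is required.
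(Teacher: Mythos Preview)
Your proposal is correct and follows essentially the same approach as the paper. Both arguments compute $Dm$ from $m=\tfrac{r}{2}(1-e^{-2\lambda})$, invoke the $(n,n)$ field equation $E_{nn}=8\pi T_{nn}$ to supply $\partial_u\lambda$, and use the second equation in \eqref{equa-three-equa} (equivalently Proposition~\ref{prop increasing m along r}) for the potential term; the only cosmetic difference is that the paper packages the $\partial_u\lambda$ and $\partial_r(\nu+\lambda)$ terms as $\Gt_{nn}$ before applying $E_{nn}=8\pi T_{nn}$, whereas you solve for $\partial_u\lambda$ first and then observe the cancellation---same identity, different bookkeeping.
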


\begin{proof}  
\bse
From \eqref{mass-definition} and \eqref{mass-r-derivative}, we have 
\be
\aligned
Dm  
 & = re^{-2\lambda}\del_u\lambda - \frac{r}{4}e^{\nu- \lambda}e^{-2\lambda}\del_r(\nu+\lambda) + \frac{1}{4}e^{-2\lambda}\big(\del_r(re^{\nu- \lambda})\big) - \frac{1}{4}e^{\nu- \lambda}
\\
 & = -{r \over 2} e^{\nu- \lambda}\big(-2e^{- \nu- \lambda}\del_u\lambda + \frac{1}{2}e^{-2\lambda}\del_r(\nu+\lambda)\big) + \frac{1}{4}e^{\nu- \lambda}\big(e^{- \nu- \lambda}\del_r(re^{\nu- \lambda}) - 1\big)
\\
 & = - \frac{r^2}{2}e^{\nu- \lambda}\Gt_{nn} + \frac{1}{4}e^{\nu- \lambda}\big(e^{- \nu- \lambda}\del_r(re^{\nu- \lambda})-1\big),
\endaligned
\ee
which implies 
\bel{eq Dm proof 1}
Dm = - \frac{r^2}{2}e^{\nu- \lambda}\Gt_{nn} - \frac{r^2}{4}e^{\nu- \lambda-2\kappa\rho}(\Vstar(R)+8\pi U(\phi)).
\ee
Recall that, in the proof of Proposition \ref{prop essential f(R) Bondi}, we found that $F_{nn} = 0$, that is, 
$
E_{nn} = 8\pi T_{nn}. 
$
Therefore, we have 
\be 
e^{\kappa\rho}\Gt_{nn} - \frac{3\kappa^2e^{\kappa\rho}}{2e^{2\nu}}|D\rho|^2 = 8\pi T_{nn}= \frac{8\pi |D\phi|^2}{e^{2\nu}}
\ee
and thus 
\be
\Gt_{nn} = \frac{3\kappa^2|D\rho|^2}{2e^{2\nu}} + \frac{8\pi|D\phi|^2}{e^{2\nu+2\rho}}.
\ee
Combining this result with \eqref{eq Dm proof 1}, we arrive at the desired identity. 
\ese
\end{proof}


\subsection{  On the possibility of negative mass}

{  

The sign of the Hawking mass can be made completely transparent by
rewriting \eqref{positivity-of-mass} in the equivalent form
\begin{equation}
\label{mass-integral-form}
m(u,r)
=
\frac{1}{2e^{\nu+\lambda}(u,r)}
\int_0^r
s^2 e^{-2\kappa\rho}
\big(\Vstar(\rho)+8\pi U(\phi)\big)
\, e^{\nu+\lambda}\, ds.
\end{equation}
This formula shows that $m(u,r)$ is obtained by integrating an
\emph{effective energy density} over the interior of the sphere of
areal radius $r$, with a strictly positive weight factor
$e^{\nu+\lambda}$.

In spherical symmetry the Hawking mass coincides with the
Misner--Sharp mass, since
\[
1-\frac{2m}{r} = g(\nabla r,\nabla r),
\]
which is a purely geometric identity depending only on the
areal radius function and not on the specific gravitational
Lagrangian. Thus the notion of mass employed here is
geometrically canonical and does not depend on the choice of
field variables or on the conformal representation.

Under the structural assumptions
\[
\Vstar(\rho)\ge 0,
\qquad
U(\phi)\ge 0,
\]
the integrand in \eqref{mass-integral-form} is non-negative.
Consequently,
\[
m(u,r)\ge 0
\quad \text{for all } r>0,
\]
with equality if and only if the spacetime is Minkowski.

If $\Vstar$ is allowed to take negative values
---for instance in models violating $f''(R)\ge 0$---
then the integrand in \eqref{mass-integral-form}
may become negative in some region.
In that situation the effective scalar sector fails to satisfy
the null energy condition, and the Raychaudhuri-based
monotonicity argument breaks down.
Negative values of $m$ may then occur.

We emphasize that negativity of the
Hawking mass does \emph{not} indicate an inadequacy of the
Misner--Sharp/Hawking mass notion.
Rather, it signals that the chosen $f(R)$ model permits
negative effective energy densities.
Within the standard viability conditions adopted throughout
this paper, such pathologies are excluded and
$m(u,r)$ remains non-negative, preserving its usual
geometric and physical interpretation.
	
	}

\subsection{Invariant domain property}

As mentioned before, to guarantee that $(\phi,\,\rho)$ is a well-defined solution of the system \eqref{notresysteme}, it remains to be shown that the right-hand side of the expression for $e^{\nu- \lambda}$, that is, 
\be
\gb={1 \over r}  \int_0^r\Big(1 - s^{2}e^{-2\kappa\rho} \big( \Vstar(\rho) + 8\pi U(\phi) \big)\Big)e^{\nu+\lambda} \, ds, 
\ee
is \emph{non-negative}. We will use the mass function to check this property.

\begin{proposition}[Invariant domain property] 
\label{proposition gb>0}
Let $(\phi,\,\rho)$ be a $C^1-$type solution to the system \eqref{notresysteme} defined on $[0,u_0]\times [0,+ \infty)$. Assume that $\gb(0,r)>0$ on $[0,+ \infty)$. Then, $\gb(u,r)>0$ on $[0,u_0]\times [0,+ \infty)$.  
\end{proposition}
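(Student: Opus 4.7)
The strategy is to reduce the positivity of $\gb$ to the positivity of a simple scalar quantity $\Psi(u,r) := r - 2m(u,r)$, and then to propagate this positivity along the characteristic curves introduced in \eqref{eq_chara1} via a linear first-order ODE with non-negative source.

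By Proposition~\ref{prop mass positive}, one has the identity
\be
\gb = e^{\nu-\lambda} = e^{\nu+\lambda}\Big(1 - \frac{2m}{r}\Big) = \frac{e^{\nu+\lambda}}{r}\, \Psi,
\ee
so, since $e^{\nu+\lambda} > 0$ always, the condition $\gb > 0$ is equivalent to $\Psi > 0$. In particular, the hypothesis $\gb(0,\cdot) > 0$ becomes $\Psi(0,\cdot) > 0$, while $\gb(u, 0) = e^{\nu+\lambda}(u, 0) > 0$ holds trivially from \eqref{eq:403}. Next, fix $(u_1, r_1)$ with $r_1 > 0$ and consider the characteristic $\chi = \chi(u)$ with $\chi(u_1) = r_1$, traced backward to $\chi(0) = r_0 \geq r_1 > 0$. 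Along $\chi$, the chain rule gives $\frac{d}{du}m(u,\chi(u)) = Dm$, so the function $\Psi(u) := \chi(u) - 2m(u, \chi(u))$ satisfies
\be
\frac{d\Psi}{du} = \frac{d\chi}{du} - 2Dm = -\frac{1}{2}\gb - 2Dm = -\frac{e^{\nu+\lambda}}{2\chi}\Psi - 2Dm.
\ee

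This is a linear first-order ODE of the form $\Psi' + P\Psi = Q$ with coefficient $P = e^{\nu+\lambda}/(2\chi) \geq 0$ and source $Q = -2Dm$. By Proposition~\ref{prop decreasing of m along D}, $Q \geq 0$; moreover, since $e^{\nu+\lambda} \leq 1$ (as $\nu+\lambda \leq 0$ from \eqref{eq:403}) and $\chi(u) \geq r_1$ on $[0, u_1]$, the coefficient $P$ is bounded and hence integrable. With integrating factor $\mu(u) = \exp\bigl(\int_0^u P\,du'\bigr) > 0$, the variation-of-constants formula gives
\be
\Psi(u)\mu(u) = \Psi(0) + \int_0^u Q(u')\,\mu(u')\,du' \geq \Psi(0) > 0,
\ee
whence $\Psi(u_1) > 0$, i.e., $\gb(u_1, r_1) > 0$.

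The main subtlety is the existence of the backward-traced characteristic on the full interval $[0, u_1]$, which requires $\gb > 0$ along its entire path. I would close the argument by a standard continuity/bootstrap step: set $u^* := \sup\{u \in [0, u_0] : \gb(u', r) > 0 \text{ for all } (u', r) \in [0, u]\times[0,+\infty)\}$ and assume for contradiction $u^* < u_0$. By continuity of $\gb$, there exists $r^*$ with $\gb(u^*, r^*) = 0$, and since $\gb(u, 0) > 0$ we necessarily have $r^* > 0$. The characteristic through $(u^*, r^*)$ extends continuously to $[0, u^*]$ (its speed is uniformly bounded by $1/2$), and the ODE argument above yields $\Psi(u^*) > 0$, contradicting $\gb(u^*, r^*) = 0$. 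Thus $u^* = u_0$ and $\gb > 0$ throughout the domain.
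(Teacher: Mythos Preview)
Your argument is correct and uses the same two ingredients as the paper---the continuation/bootstrap setup and the inequality $Dm\le 0$ from Proposition~\ref{prop decreasing of m along D}---but the mechanism for extracting a contradiction is genuinely different. The paper works at the \emph{fixed} radius $r_*$: it rewrites $Dm\le 0$ as $\del_u m\le \tfrac{1}{2}\gb\,\del_r m$, converts this via $\del_u m=r e^{-2\lambda}\del_u\lambda$ and Proposition~\ref{prop increasing m along r} into an explicit upper bound on $\del_u\lambda$, and integrates in $u$ to show that $\lambda(u,r_*)$ stays bounded as $u\to u_*$, hence $e^{-2\lambda}(u_*,r_*)>0$, contradicting $\gb(u_*,r_*)=0$. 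You instead follow the \emph{characteristic} through $(u^*,r^*)$ and track $\Psi=r-2m$ directly via the linear ODE $\Psi'+P\Psi=-2Dm\ge 0$; the integrating-factor formula then gives $\Psi(u^*)\ge \mu(u^*)^{-1}\Psi(0)>0$ at once. Your route is arguably cleaner because $\Psi$ is exactly proportional to the quantity $\gb$ whose sign is at stake, and the positivity propagation is a one-line consequence of variation of constants; the paper's route has the advantage of yielding an explicit quantitative bound on $\lambda$ itself. One small point worth making explicit in your write-up: the formula for $Dm$ in Proposition~\ref{prop decreasing of m along D} is only available on $[0,u^*)$ (where $\lambda$ is finite), so the inequality $\mu\Psi\ge\Psi(0)$ is first obtained for $u<u^*$ and then passed to the limit $u\to u^*$ using continuity of $\Psi=r\gb/e^{\nu+\lambda}$ and boundedness of $P$ and $Q$ on the compact set swept by the characteristic; you implicitly use this, and it goes through since all ingredients ($e^{-(\nu+\lambda)}$, $D\phi$, $D\rho$, $e^{2\nu}=e^{\nu+\lambda}\gb$) remain continuous there.
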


\begin{proof} 
\bse
We argue by contradiction. Assume that there exists $(\widehat{u},\widehat{r}) \in [0,u_0]\times [0,+ \infty)$ such that
\be\label{contradiction gb<0}
\gb(\widehat{u},\widehat{r})\le 0.
\end{equation}
We set
\be
u_*:= \sup\bigg\{u_1\in [0,u_0]\quad\text{such that}\quad \text{$\gb(u,r)>0$, for all $(u,r) \in [0,u_1]\times[0,\widehat{r}]$} \bigg\}.
\ee
Since $\gb(0,r)>0$ for all $r\in [0,\widehat{r}]$, we obtain  $0<u_*\le \widehat{u}$. By the definition of $u_*$, we have  
\begin{itemize}
\item $\gb>0$ on $[0,u_*)\times [0,\widehat{r}]$,
\item there exists $r_*\in [0,\widehat{r}]$ such that $\gb(u_*,r_*) = 0$.
\end{itemize}   
Next, on $[0,u_*)\times [0,r_*]$ we redefine 
\be
e^{\nu- \lambda}:= \gb,
\ee
so $(\phi,\,\rho)$ is a well-defined solution of the system \eqref{notresysteme}  on $[0,u_*)\times [0,r_*]$. Therefore, by Proposition \ref{prop decreasing of m along D}, we establish that 
\be
Dm = - \frac{r^2e^{- \nu- \lambda}}{4}\Big(3\kappa^2  |D\rho|^2 
+ 16\pi  e^{- \kappa\rho} |D\phi|^2 
+ (\Vstar(\rho) +8\pi U(\phi))e^{2\nu-2\kappa \rho}\Big)
\ee
for all $(u,r) \in [0,u_*)\times [0,r_*]$. Since the right-hand side of this identity is non-positive, we obtain
\be
\del_u m \le \frac{\gb}{2}\del_rm \quad \text{on} \quad [0,u_*)\times [0,r_*]
\ee	
{  Using $\partial_u m=re^{-2\lambda}\partial_u\lambda$ and \eqref{mass-r-derivative}, 
} for all $(u,r) \in [0,u_*)\times [0,r_*]$ we have 
\be
\del_u\lambda\le \frac{re^{\nu+\lambda-2\kappa\lb}}{4}(\Vstar(\rho)+8\pi U(\phi)) + \frac{re^{\nu- \lambda}}{4}\Big( \frac{3\kappa^2}{4}(\del_r\rho)^2 + \frac{4\pi}{e^{\kappa\lb}} (\del_r\phi)^2\Big).
\ee
\ese
\bse
It follows that for all $(u,r) \in [0,u_*)\times [0,r_*]$
\be
\lambda(u,r)\leq \lambda(0,r)+\int_0^u\left[\frac{re^{\nu+\lambda-2\kappa\rho}}{4}\big(\Vstar(\rho)+8\pi U(\phi)\big) 
+ \frac{re^{\nu- \lambda}}{4}\Big( \frac{3\kappa^2}{4}(\del_r\rho)^2 + \frac{4\pi}{e^{\kappa\rho}} (\del_r\phi)^2\Big)\right]du_1,
\ee
so $\lambda(u,r_*)$ remains bounded above as $u\uparrow u_*$. Hence $e^{-2\lambda(u,r_*)}$ stays bounded below by a { {positive}} constant on $[0,u_*)$.

On the other hand, since $\gb=e^{\nu-\lambda}$ and $e^{\nu+\lambda}>0$, the identity $e^{-2\lambda}=\gb\,e^{-(\nu+\lambda)}$ yields
\[
\lim_{u\uparrow u_*}e^{-2\lambda(u,r_*)}
=\lim_{u\uparrow u_*}\gb(u,r_*)\,e^{-(\nu+\lambda)(u,r_*)}=0,
\]
which contradicts the positive lower bound. This contradiction completes the proof.
\ese
\end{proof}


\subsection{Bondi mass}

We assume the existence of the limit\footnote{  
In the present characteristic setting, finiteness of $M_0$ ensures that the Bondi mass is well-defined and that the integral identities used in Propositions~\ref{prop Bondi} and thereafter are meaningful. Without this assumption, the spacetime would not represent an isolated gravitating system but rather a configuration with infinite total energy, for which neither Bondi mass nor energy flux identities are physically appropriate.}
\be
M_0 := \lim_{r\to + \infty}m(0,r),
\ee
that is, the initial total mass is finite. Then we set 
\be
M(u) := \lim_{r\to + \infty}m(u,r), 
\ee
which is referred to as the \emph{Bondi mass}. 

\begin{proposition}[Generalized Bondi theorem for $f(R)$ gravity]
\label{prop Bondi}
The Bondi mass $M$ is a monotonically non-increasing function of the variable $u$: 
\be
\frac{d M}{d u}\leq 0.
\ee
\end{proposition}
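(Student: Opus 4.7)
The plan is to extract the mass-loss inequality from the pointwise bound $Dm\le 0$ of Proposition~\ref{prop decreasing of m along D}. Since $D=\del_u-\tfrac{1}{2}e^{\nu-\lambda}\del_r$, this rewrites as the algebraic identity
\begin{equation*}
\del_u m(u,r) \;=\; Dm(u,r) \;+\; \tfrac{1}{2}e^{\nu-\lambda}(u,r)\,\del_r m(u,r),
\end{equation*}
whose right-hand side decomposes into a non-positive term by Proposition~\ref{prop decreasing of m along D} and a non-negative one, since $\del_r m\ge 0$ by Proposition~\ref{prop increasing m along r}. Controlling the sign of $\del_u m$ therefore reduces to an asymptotic analysis at $r\to+\infty$, where I expect the non-negative contribution to vanish.

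Next I would pass to the limit $r\to+\infty$ on both sides. Asymptotic flatness gives $e^{\nu-\lambda}\to 1$, while the monotonicity $\del_r m\ge 0$ together with the existence of the finite limit $M(u)$ yields $\int_0^{+\infty}\del_r m(u,s)\,ds = M(u) <+\infty$. Under suitable decay of each of the summands on the right-hand side of~\eqref{mass-r-derivative}---namely quantitative control at infinity on $h-\hb$, $l-\lb$, $\Vstar(\rho)$ and $U(\phi)$---one infers that $\del_r m(u,r)\to 0$ as $r\to+\infty$, so that the second term in the identity above drops out. Consequently
\begin{equation*}
\lim_{r\to+\infty}\del_u m(u,r) \;=\; \lim_{r\to+\infty}Dm(u,r) \;\le\; 0,
\end{equation*}
the limit on the right being the non-positive Bondi flux read off from the explicit formula of Proposition~\ref{prop decreasing of m along D}.

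The final step is to commute $d/du$ with $\lim_{r\to+\infty}$, so that this non-positive limit coincides with $dM/du$. This is legitimate once the convergence $m(u,r)\to M(u)$ is locally uniform in $u$, which can be obtained by a Dini-type argument exploiting the monotonicity of $m$ in $r$ (Proposition~\ref{prop increasing m along r}) together with the $C^1$ regularity of the unknowns and the uniform control on $M(u)$ propagated forward from the initial datum $M_0<+\infty$. The hard part is precisely this asymptotic analysis: turning the integrability of $\del_r m$ into pointwise decay and securing uniformity in $u$ requires asymptotic estimates on $(\phi,\rho)$ beyond the bare flatness condition $\nu+\lambda\to 0$. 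Once such decay is imposed at $u=0$ and propagated by the equations, $dM/du\le 0$ follows at once from the monotonicity identities already established.
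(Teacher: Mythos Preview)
Your route is genuinely different from the paper's and, as you yourself flag, it needs hypotheses the paper does not assume. The two hard steps you identify---turning integrability of $\del_r m$ into pointwise decay $\del_r m(u,r)\to 0$, and commuting $d/du$ with $\lim_{r\to+\infty}$---do \emph{not} follow from the standing assumptions (asymptotic Euclidean condition plus $C^1$ regularity and $M_0<+\infty$). Integrability of a non-negative function does not give pointwise decay without extra regularity, and Dini-type arguments need continuity of the limit $M(u)$, which is part of what you are trying to prove. So as written, your plan does not establish the proposition under the paper's hypotheses.

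The paper's argument bypasses all of this by a direct geometric comparison using only the two monotonicity propositions. Fix $u_1\le u_2$ and any $r>0$, and follow the incoming null characteristic through $(u_2,r)$ backward to its position $(u_1,r')$ with $r'\ge r$. Proposition~\ref{prop decreasing of m along D} gives $m(u_1,r')\ge m(u_2,r)$; Proposition~\ref{prop increasing m along r} gives $M(u_1)\ge m(u_1,r')$. Hence $M(u_1)\ge m(u_2,r)$ for every $r$, and letting $r\to+\infty$ yields $M(u_1)\ge M(u_2)$. No asymptotic analysis, no interchange of limits, no differentiability of $M$ is needed. Your approach would buy an explicit Bondi mass-loss \emph{formula} (the flux at infinity from \eqref{mass-D-derivative}) rather than just monotonicity, but only at the price of additional decay assumptions on $(\phi,\rho)$; the paper trades that sharper statement for a proof that works under the minimal hypotheses already in place.
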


\begin{proof}
\bse
Fix some 
$
0\leq u_1\leq u_2.
$
For any $r \geq 0$, assume that the characteristic $\chi_1(\cdot, r_0)$ passes the point $(u_2,r)$. From Proposition \ref{prop decreasing of m along D}, we have 
\be
m(\chi_1(u_1))\geq m(\chi_1(u_2))
\ee
and, from Proposition \ref{prop increasing m along r},
\be
M(u_1)\geq m(\chi_1(u_2)) = m(u_2,r),
\ee
which is valid for any $r$. Since
\be
M(u) := \lim_{r\to + \infty} m(r,u),
\ee
we conclude that
${M(u_2) \leq M(u_1)}$.
\ese
\end{proof}


\subsection{Final mass}

Consider a sufficiently regular and globally defined solution 
to the system \eqref{notresysteme}. 
From Proposition \ref{prop Bondi}, the limit
\be
M_f = \lim_{u\to + \infty} M(u)
\ee
exists, which is referred to as the {\em final Bondi mass}.

\begin{proposition}[A property of the final Bondi mass]
For $r_0 > 2 M_f$, the timelike lines $r = r_0$ are complete towards the future.
\end{proposition}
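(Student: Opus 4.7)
\medskip

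\noindent\textbf{Proof proposal.}
My plan is to parameterize the timelike line $\gamma(u)=(u,r_0)$ by the coordinate $u$ and show that its future proper time $\tau(U)=\int_0^U e^{\nu(u,r_0)}\,du$ (computed with $\gt$; the physical metric differs by the bounded conformal factor $e^{-\kappa\rho/2}$ and is handled identically) diverges as $U\to+\infty$. To this end I will establish a uniform positive lower bound on $e^{\nu(u,r_0)}$ for all $u$ larger than some threshold $U_0$, by factoring
\be
e^{\nu(u,r_0)} = e^{(\nu+\lambda)(u,r_0)}\cdot e^{-\lambda(u,r_0)}
\ee
and controlling each factor separately using the mass monotonicity properties already proved.

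For $e^{-\lambda(u,r_0)}$, I will invoke the identity $e^{-2\lambda}=1-2m/r$ from Proposition~\ref{prop mass positive}. Monotonicity of the Hawking mass in $r$ (Proposition~\ref{prop increasing m along r}) gives $m(u,r_0)\leq M(u)$, and monotonicity of the Bondi mass (Proposition~\ref{prop Bondi}) gives $M(u)\searrow M_f$. Picking $\varepsilon>0$ so that $M_f+\varepsilon<r_0/2$ and $U_0$ so that $M(u)\leq M_f+\varepsilon$ for all $u\geq U_0$, the same reasoning applied at any $r\geq r_0$ yields
\be
e^{-2\lambda(u,r)} \geq 1 - \frac{2(M_f+\varepsilon)}{r_0} =: c_0^2 > 0
\ee
uniformly on $[U_0,+\infty)\times[r_0,+\infty)$.

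For $e^{(\nu+\lambda)(u,r_0)}$, I will start from the representation \eqref{eq:403} and use $1/s\leq 1/r_0$ for $s\geq r_0$ to estimate
\be
-(\nu+\lambda)(u,r_0)\leq \frac{1}{r_0}\int_{r_0}^{+\infty}\Big(\tfrac{3\kappa^2}{4}|l-\lb|^2 + \tfrac{4\pi}{e^{\kappa\lb}}|h-\hb|^2\Big)\,ds.
\ee
The formula for $\del_r m$ in Proposition~\ref{prop increasing m along r} shows the bracketed integrand is at most $2e^{2\lambda}\,\del_r m$. Using the bound $e^{2\lambda}\leq c_0^{-2}$ just established and $\int_{r_0}^{+\infty}\del_r m\,ds=M(u)-m(u,r_0)\leq M_f+\varepsilon$, I obtain
\be
-(\nu+\lambda)(u,r_0) \leq \frac{2(M_f+\varepsilon)}{r_0\, c_0^2},\qquad u\geq U_0,
\ee
so $e^{(\nu+\lambda)(u,r_0)}$ is bounded below by a positive constant. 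Combining the two estimates yields $e^{\nu(u,r_0)}\geq c>0$ on $[U_0,+\infty)$, and since $e^\nu>0$ on the compact interval $[0,U_0]$ by regularity, $\tau(U)\to+\infty$ as $U\to+\infty$.

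The hard step will be the second bound: although $\int_{r_0}^{+\infty}\del_r m\,ds$ is immediately controlled by the Bondi mass, the integrand appearing in the representation of $-(\nu+\lambda)$ lacks the overall factor $e^{-2\lambda}$ present in $\del_r m$, and one must trade it for its inverse via the uniform positivity of $e^{-2\lambda}$ on $\{r\geq r_0\}$—a step that uses the strict inequality $r_0>2M_f$ in an essential way, via the asymptotic control on the Bondi mass rather than just its initial value.
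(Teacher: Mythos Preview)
Your argument is correct and complete. Both factors in $e^\nu=e^{\nu+\lambda}\cdot e^{-\lambda}$ are bounded below exactly as you claim: the uniform bound $e^{-2\lambda(u,r)}\geq 1-2(M_f+\varepsilon)/r_0$ on $[U_0,\infty)\times[r_0,\infty)$ follows from $m(u,r)\leq M(u)\leq M_f+\varepsilon$ and $r\geq r_0$, and your comparison of the $(\nu+\lambda)$-integrand with $2e^{2\lambda}\del_r m$ via \eqref{mass-r-derivative} is legitimate since the $\Vstar$ and $U$ contributions are nonnegative.

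The paper proceeds differently and, in fact, more directly. Rather than splitting $e^\nu$ into two factors, it uses $\lambda\geq0$ (from Proposition~\ref{prop mass positive}) to write $e^\nu\geq e^{\nu-\lambda}$ and then bounds $\nu-\lambda$ from below by integrating the second metric equation \eqref{equa-three-equa} from $r_0$ to $\infty$. Dropping the nonpositive potential contribution and using $m(u,r)\leq M(u)$ yields the explicit Schwarzschild-type comparison
\[
e^{\nu(u,r_0)}\;\geq\;1-\frac{2M(u)}{r_0},
\]
after which the conclusion is immediate once $M(u)$ is close enough to $M_f$. This avoids the detour through the $(\nu+\lambda)$-representation and the extra use of \eqref{mass-r-derivative}; in particular, no uniform control of $e^{2\lambda}$ on the whole region $\{r\geq r_0\}$ is needed. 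Your route, on the other hand, makes transparent exactly where each monotonicity statement enters and would adapt well to settings where $\lambda\geq0$ is not available but an integral bound on $\nu+\lambda$ is.
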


\begin{proof}
\bse
Assume $0\leq u_1 \leq u_2.$
We write down the proper time between $(u_1,r_0)$ and $(u_2,r_0)$ along the timelike line $r=r_0$ as
\be
\tau_{(u_1,u_2)} = \int_{u_1}^{u_2}e^{\nu(u,r_0)}du.
\ee
We now claim that
\be
\lim_{u_2\to + \infty} \tau_{(0,u_2)} = + \infty.
\ee
To prove this, we proceed with the following estimates. First of all, from the proof of Proposition \ref{prop mass positive}, we get
$ 
e^{\nu- \lambda} \leq e^{\nu+\lambda}
$
which leads us to
$  
e^{\nu} \geq e^{\nu- \lambda}.
$
\ese

\bse
Now, we focus on $\nu- \lambda$:
\be
\aligned
(\nu- \lambda)|_{(u_1,r_0)}
& = - \int_{r_0}^{\infty}\del_r(\nu- \lambda)|_{(u_1,r)}dr
   =- \int_{r_0}^{\infty}\del_r(\ln(e^{\nu- \lambda}))|_{(u_1,r_0)} 
\\
& = - \int_{r_0}^{\infty}\frac{e^{\nu+\lambda}}{re^{\nu- \lambda}}\Big(1- \frac{e^{\nu- \lambda}}{e^{\nu+\lambda}}- r^2e^{-2\kappa\rho}(\Vstar(\rho)+8\pi U(\phi))\Big) \, dr
\\
& \geq - \int_{r_0}^{\infty}\frac{e^{\nu+\lambda}}{re^{\nu- \lambda}}\Big(1- \frac{e^{\nu- \lambda}}{e^{\nu+\lambda}}\Big) dr
 = - \int_{r_0}^{\infty} \frac{2m}{r^2}\Big(1- \frac{2m}{r}\Big)^{-1}dr.
\endaligned
\ee
Since $m(u,r)\leq M(u)$, we find 
\be
\aligned
(\nu- \lambda)|_{(u_1,r_0)} \geq - \int_{r_0}^{\infty} \frac{2M(u_1)}{r^2}\Big(1- \frac{2M(u_1)}{r}\Big)^{-1}dr
= \log\Big(1- \frac{2M(u_1)}{r_0}\Big),
\endaligned
\ee
that is, 
\be
e^{\nu(u_1,r_0)} \ge 1- \frac{2M(u_1)}{r_0}.
\ee
For $r_0\geq 2M_f$, as $M(u)\to M_f$ decreasingly, there exists a sufficiently large $u_2$  such that for any $u\geq u_2$
\be
1- \frac{2M(u)}{r_0}\geq \frac{1}{2}\Big(1- \frac{2M_f}{r_0}\Big).
\ee
For $u\geq u_2$, when $u\to + \infty$  we find 
\be
\tau_{(0,u)} = \int_0^{u}e^{\nu(\tau,r_0)}d\tau
\geq \int_{u_2}^{u}e^{\nu(\tau,r_0)}d\tau
\geq \frac{1}{2}\Big(1- \frac{2M_f}{r_0}\Big) (u-u_2) \to + \infty.
\qedhere
\ee
\ese
\end{proof} 
 

\section{  Conclusions and perspectives}
\label{sec:conclusions}

\subsection{  Structural reduction and consistency}

{  

We introduced a characteristic first-order formulation of the $f(R)$ field equations coupled to a (possibly massive) scalar field in spherical symmetry, using generalized Bondi--Sachs coordinates. 

The main structural difficulty in $f(R)$ gravity lies in the presence of higher-derivative terms, in particular derivatives of the scalar curvature. In standard characteristic gauges, these terms prevent the system from closing at first order and obscure the hyperbolic structure. 

Our main contribution is the construction of an \emph{augmented characteristic gauge} based on an augmented conformal formulation. Following the treatment of the theory of $f(R)$ gravity in the Minkowski regime, proposed earlier by LeFloch and Ma \cite{LeFlochMa17a} --\cite{LeFlochMa26}, 
we introduced the conformal metric and the variable
\be
\rho := \kappa^{-1} \log f'(R),
\ee
which, under the standard viability condition $f'(R)>0$, can be treated as an \emph{independent unknown}. This augmentation is not merely a change of variables: it is precisely what allows the characteristic reduction to close at first order while remaining equivalent to the full $f(R)$ equations.

We proved that, within a mild $C^1$ regularity class and under appropriate center regularity conditions, solutions of the reduced integro-differential system for $(\phi,\rho)$ are equivalent to solutions of the full $f(R)$ system in spherical symmetry. In particular, the \emph{nonlinear differential constraint $f'(R)=e^{\kappa\rho}$ propagates}, so that the augmented formulation does not enlarge the theory but provides a mathematically consistent reformulation.

}

\subsection{  Geometric control and Hawking mass monotonicity}

{  
Beyond providing a closed evolution system, the characteristic reduction yields a robust geometric control mechanism. In the Einstein--scalar-field setting, monotonicity properties of the Hawking mass play a central role in collapse analysis and global arguments. 

In the $f(R)$ framework, an additional obstruction arises: there is no \emph{a priori} guarantee that the constraint reconstructing the metric coefficient $e^{\nu-\lambda}$ preserves its sign. The augmented formulation allows us to control this issue through a Hawking-mass identity, which ensures that the relevant metric factors remain globally well-defined \emph{under the structural assumptions} adopted in this work.

We established that the Hawking mass is non-decreasing along radial directions and non-increasing along incoming null directions. These monotonicity properties are geometric in nature and provide a priori control relevant to trapped-surface formation, collapse scenarios, and strong-field regimes. In numerical applications, they also furnish natural diagnostics for monitoring the integrity of the evolution and the onset of nonlinear effects.

}

\subsection{  Perspectives and open directions}

{  

At a formal level, the present formulation recovers Christodoulou's characteristic system in the \emph{formal} limit $f(R)\to R$ and $U(\phi)\to 0$, providing a consistency check on the reduction. Rigorous analysis of the near-Einstein regime without symmetry restriction can be found in \cite{LeFlochMa23,LeFlochMa26}.

The scope of the present manuscript is primarily mathematical. Our objective was, in spherical symmetry, to construct a characteristic first-order formulation and to establish its properties. A detailed confrontation with observational constraints would require selecting specific $f(R)$ models and parameter regimes, incorporating also a more realistic matter model.

The formulation developed here naturally suggests several directions for further investigations. 

\smallskip
\noindent\emph{(i) Global evolution and collapse.}
The reduced system provides a framework for studying global existence, trapped-surface formation, and strong-field regimes beyond the near-Minkowski setting. The interaction between the additional scalar degree of freedom and the matter field may lead to collapse mechanisms not present in the Einstein case. It would be natural to seek an extension of Christodoulou's pioneering work on Penrose's weak cosmic censorship and the formatipon of trapped surfaces for Einstein gravity. 

\smallskip
\noindent\emph{(ii) Characteristic numerical implementation.}
The structure ``evolve $(\phi,\rho)$ along null directions and reconstruct the metric by radial integration'' is directly compatible with characteristic evolution codes. Implementing this system for specific $f(R)$ models would allow quantitative exploration of nonlinear dynamics and comparison with other formulations.

\smallskip
\noindent\emph{(iii) Extensions to broader modified-gravity models.}
The augmented-variable strategy ---treating higher-derivative geometric quantities as independent unknowns to obtain a closed first-order characteristic system--- may extend to more general modified Lagrangians and matter models.

\smallskip
Overall, the augmented characteristic gauge introduced here provides a mathematically consistent and geometrically controlled first-order reduction of the $f(R)$--scalar-field system in spherical symmetry. It bridges structural PDE analysis and characteristic evolution techniques, and it lays groundwork for future studies of nonlinear dynamics and global geometry in modified gravity.

}


\paragraph*{Acknowledgments.}

This work was initiated while PLF was a visiting research fellow at the Courant Institute of Mathematical Sciences, New York University, and a visiting professor at the School of Mathematical Sciences, Fudan University, Shanghai. FCM is grateful to CAMGSD, IST-ID, for support through the FCT/Portugal project UIDB/04459/2020 as well as to CMAT, Univ. Minho, through the project UIDB/00013/2020 and FEDER funds (COMPETE). Both authors benefited from the support of the H2020-MSCA-2022-SE project Einstein--Waves, Grant Agreement~101131233. PLF was also supported by the research project \emph{Einstein-PPF}: ANR-23-CE40-0010, funded by the Agence Nationale de la Recherche (ANR).


\small


\appendix

\section{Connection coefficients and curvature components}
\label{section---A1}

The connection associated with the metric \eqref{eq:metricrad} is torsion-free and we have
\be
(\nablat_\alpha e_{\beta},e_{\gamma}) = \frac{1}{2}\big(([e_\alpha,e_{\beta}],e_{\gamma})+([e_{\gamma},e_\alpha],e_{\beta})+([e_{\gamma},e_{\beta}],e_\alpha)\big).
\ee
The commutators of the frame \eqref{frame1} read
\be
\aligned
& [e_0,e_1] = [n,l] = e^{- \lambda- \nu}\big(\del_r\nu\del_u- \del_u\lambda\del_r\big),
\qquad
&& [e_0,e_2] = [n,\zeta_1] = \frac{1}{2re^{\lambda}}\zeta_1,
\\
& [e_0,e_3] = [n,\zeta_2] = \frac{1}{2re^{\lambda}}\zeta_2,
\qquad
&& [e_1,e_2] = [l,\zeta_1] = - \frac{1}{re^{\lambda}}\zeta_1,
\\
&[e_1,e_3] = [l,\zeta_2] = - \frac{1}{re^{\lambda}}\zeta_2,
\qquad
&&[e_2,e_3] = [\zeta_1,\zeta_2] = -r^{-1} (\cot\theta ) \zeta_2.
\endaligned
\ee
The non-vanishing components of the connection coefficients of the frame are
\be
\aligned
&(\nablat_n n,l) = -e^{- \nu}\del_u\lambda + \frac{1}{2}e^{- \lambda}\del_r \nu,
\qquad
&&(\nablat_n l,n) = e^{- \nu}\del_u\lambda - \frac{1}{2}e^{- \lambda}\del_r \nu,
\\
&(\nablat_l n,l) = e^{- \lambda}\del_r \nu,
\qquad
&&(\nablat_l l,n) = -e^{- \lambda}\del_r\nu,
\\
&(\nablat_{\zeta_1}n,\zeta_1) = - \frac{1}{2re^{\lambda}},
\qquad
&&(\nablat_{\zeta_1}l,\zeta_1) = \frac{1}{re^{\lambda}},
\\
&(\nablat_{\zeta_1}\zeta_1,n) = \frac{1}{2re^{\lambda}},
\qquad
&&(\nablat_{\zeta_1}\zeta_1,l) = - \frac{1}{re^{\lambda}},
\\ 
&(\nablat_{\zeta_2}n,\zeta_2) = - \frac{1}{2re^{\lambda}},
\qquad
&&(\nablat_{\zeta_2}l,\zeta_2) = \frac{1}{re^{\lambda}},
\\
&(\nablat_{\zeta_2}\zeta_1,\zeta_2) = r^{-1}\cot\theta,
\qquad
&&(\nablat_{\zeta_2}\zeta_2,n) = \frac{1}{2 re^{\lambda}},
\\
&(\nablat_{\zeta_2}\zeta_2,l) = - \frac{1}{re^{\lambda}},
\qquad
&&(\nablat_{\zeta_2}\zeta_2,\zeta_1) = -r^{-1}\cot\theta,
\endaligned
\ee
while the covariant derivatives of the frame vectors are
\be
\aligned
& \nablat_n n = \Big(e^{- \nu}\del_u\lambda - \frac{1}{2}e^{- \lambda}\del_r\nu\Big)l,
\qquad
&& \nablat_n l = \Big(\frac{1}{2}e^{- \lambda}\del_r\nu - e^{- \nu}\del_u\lambda\Big)l,
\\
& \nablat_l n = -e^{- \lambda}\del_r\nu \cdot n,
\qquad
&& \nablat_l l = e^{- \lambda}\del_r\nu\cdot l,
\\
& \nablat_{\zeta_1}n = - \frac{1}{2re^{\lambda}}\cdot \zeta_1,
\qquad
&& \nablat_{\zeta_1}l = \frac{1}{re^{\lambda}}\zeta_1,
\\
& \nablat_{\zeta_1}\zeta_1 = \frac{1}{re^{\lambda}}n - \frac{1}{2re^{\lambda}}l, 
\qquad
&& \nablat_{\zeta_1}\zeta_2 = -r^{-1}\cot\theta\cdot\zeta_1,
\\
& \nablat_{\zeta_2}n = - \frac{1}{2re^{\lambda}}\zeta_2,
\qquad
&& \nablat_{\zeta_2}l = \frac{1}{re^{\lambda}}\zeta_2,
\\
& \nablat_{\zeta_2}\zeta_1 = r^{-1}\cot \theta\cdot \zeta_2,
\qquad
&& \nablat_{\zeta_2}\zeta_2 = \frac{1}{re^{\lambda}}n - \frac{1}{2re^{\lambda}}l - r^{-1}\cot\theta\cdot\zeta_1.
\endaligned
\ee
We can now compute the covariant derivatives of the coframe \eqref{coframe} by
$\del_\alpha\big(<\omega^\beta,e_{\gamma}>\big) = 0$, thus
$<\nablat_\alpha\omega^\beta,e_{\gamma}> + <\omega^\beta,\nablat_\alpha e_{\gamma}> = 0$, which shows that
\bel{eq:520}
\nablat_\alpha\omega^\beta = -<\omega^\beta,\nablat_\alpha e_{\gamma}>\omega^{\gamma}.
\ee
For any  spherically symmetric function $w$, we can write $dw = n(w)\omega^0 + l(w)\omega^1$, so that the Hessian reads
\be
\aligned
\nablat dw
& =  <n,n(w)>\omega^0\otimes\omega^0 + <n,l(w)>\omega^1\otimes\omega^0 + <l,n(w)>\omega^0\otimes\omega^1
\\
& \quad +  <l,l(w)>\omega^1\otimes\omega^1 + n(w)\nablat\omega^0 + l(w)\nablat\omega^1.
\endaligned
\ee
We thus obtain the following expression in terms of the metric coefficients in Bondi coordinates:
\be
\aligned
\nablat_n\nablat_n w
& = e^{-2\nu}\del_{uu}w + \frac{1}{4}e^{-2\lambda}\del_{rr}w - e^{- \nu- \lambda}\del_{ur}w
+
\big(\frac{1}{2}e^{- \nu- \lambda}\del_r\nu - e^{-2\nu}\del_u\nu\big)\del_u w
\\
& \quad +\Big(\frac{1}{2}e^{-2\lambda}\del_r\nu - \frac{1}{2}e^{- \nu- \lambda}\del_u\lambda- \frac{1}{4}e^{-2\lambda}\del_r\lambda\Big)\del_r w,
\\
\nablat_n\nablat_l w
& = \nablat_l\nablat_n w = - \frac{1}{2}e^{-2\lambda}\del_r(\nu- \lambda)\del_rw - \frac{1}{2}e^{-2\lambda}\del_{rr}w + e^{- \nu- \lambda}\del_{ru}w,
\\
\nablat_l\nablat_l w
& = e^{-2\lambda}(\del_{rr}w- \del_r\lambda\del_rw- \del_r\nu\del_rw),
\\
\nablat_{\zeta_1} \nablat_{\zeta_1}w
& = \nablat_{\zeta_2} \nablat_{\zeta_2}w = -r^{-1}e^{- \nu- \lambda}\del_uw + r^{-1}e^{-2\lambda}\del_rw.
\endaligned
\ee
By taking the trace of the Hessian, we immediately obtain the expression of the wave operator acting on a spherically symmetric function. 
In turn, the Christoffel  symbols in the coordinates $(u,r,\theta,\varphi)$ read:
\be
\aligned
& \Gamma^u_{uu}= \del_u(\nu+\lambda)-e^{\nu- \lambda}\del_r\nu,\qquad
&& \Gamma^{r}_{uu}=-e^{\nu- \lambda}\del_u\lambda+e^{2(\nu- \lambda)}\del_r\nu,
\\
& \Gamma^{r}_{ur}= \Gamma^{r}_{ru}=e^{\nu- \lambda}\del_r\nu,\qquad
&& \Gamma^{r}_{rr}= \del_r(\nu+\lambda), 
\\
& \Gamma^{\theta}_{r\theta}= \Gamma^{\theta}_{\theta r}= \frac{1}{r},\qquad
&& \Gamma^{\theta}_{\varphi\varphi}=- \frac{1}{2}\sin{2\theta}, 
\\
& \Gamma^{u}_{\theta\theta}=re^{- \nu- \lambda}, \qquad
&& \Gamma^{r}_{\theta\theta}=-re^{-2\lambda}, 
\\
& \Gamma^{\varphi}_{r\varphi}= \Gamma^{\varphi}_{\varphi r}= \frac{1}{r},\qquad
&& \Gamma^{\varphi}_{\varphi\theta}= \Gamma^{\varphi}_{\theta\varphi}= \cot{\theta}, 
\\
& \Gamma^{u}_{\varphi\varphi}=r\sin^{2}{\theta}e^{- \nu- \lambda},
\qquad
&& \Gamma^{r}_{\varphi\varphi}=-r\sin^{2}{\theta}e^{-2\lambda}. 
\endaligned
\ee
In the frame $(n,l,\zeta_1,\zeta_2)$ we find
\be
\aligned
\label{connection-on-frame}
& \Gamma^{n}_{nn}=e^{- \nu}\del_u\lambda- \frac{1}{2}e^{- \lambda}\del_r\nu,
\qquad
&& \Gamma^{n}_{\zeta_1\zeta_1}= \frac{e^{- \lambda}}{r}, 
&&& \Gamma^{n}_{\zeta_2\zeta_2}= \frac{e^{- \lambda}}{r}, 
\\
& \Gamma^{n}_{ln}=-e^{- \lambda}\del_r\nu,
&& \Gamma^{l}_{nl}= \frac{1}{2}e^{- \lambda}\del_r\nu-e^{- \nu}\del_u\lambda,\qquad
&&& \Gamma^{l}_{ll}=e^{- \lambda}\del_r\nu,
\\
& \Gamma^{l}_{\zeta_1\zeta_1}=- \frac{e^{- \lambda}}{2r},\qquad
&& \Gamma^{l}_{\zeta_2\zeta_2}=- \frac{e^{- \lambda}}{2r},
&&& \Gamma^{\zeta_1}_{\zeta_1n}=- \frac{e^{- \lambda}}{2r}, 
\\
& \Gamma^{\zeta_1}_{\zeta_1l}= \frac{e^{- \lambda}}{r},
&& \Gamma^{\zeta_1}_{\zeta_2\zeta_2}=- \frac{1}{r}\cot{\theta}, \qquad
&&& \Gamma^{\zeta_2}_{\zeta_2n}=- \frac{e^{- \lambda}}{2r}, 
\\
& \Gamma^{\zeta_2}_{\zeta_2l}= \frac{e^{- \lambda}}{r}, \qquad
&& \Gamma^{\zeta_2}_{\zeta_2\zeta_1}= \frac{\cot{\theta}}{r}.
&&&
\endaligned
\ee
Then the non-vanishing frame components of the Ricci curvature of \eqref{eq:metricrad} read
\bel{eq:JDJK9}
\aligned
\Rt_{nn} & = \frac{1}{r}\Big(\frac{1}{2}e^{-2\lambda}\del_r(\nu+\lambda) - 2e^{- \nu- \lambda}\del_u\lambda\Big),
\\
\Rt_{nl} & = \Rt_{ln} = -e^{- \nu- \lambda}\del_{ur}(\nu+\lambda) + \frac{e^{-2\lambda}}{r}\del_r(\nu- \lambda) + e^{-2\lambda}\big(\del_{rr}\nu + (\del_r\nu)^2 - \del_r\lambda\del_r\nu\big),
\\
\Rt_{ll} & = \frac{2e^{-2\lambda}}{r}\del_r(\nu+\lambda),
\qquad\qquad
\Rt_{\zeta_1\zeta_1}
 = \Rt_{\zeta_2\zeta_2} = \frac{1}{r^2}(1-e^{-2\lambda}) - \frac{e^{-2\lambda}}{r}\del_r(\nu- \lambda).
\endaligned
\ee 
Taking the trace of the conformal Ricci tensor with respect to $\gt$, we find the expression $
\Rt=-2\Rt_{nl} + \Rt_{\zeta_1\zeta_1} + \Rt_{\zeta_2\zeta_2}
$
for the scalar curvature and, therefore, 
\bel{eq:Rtilde}
\Rt=
 \frac{2}{e^{\nu+\lambda}}\del_{ur}(\nu+\lambda) - \frac{4}{re^{2\lambda}}\del_r(\nu- \lambda)
 - \frac{2}{e^{2\lambda}}(\del_{rr}\nu + \del_r\nu \del_r (\nu- \lambda)) + \frac{2}{r^2}\big(1-e^{-2\lambda}\big).
\ee
We then deduce the expressions of the frame components of the Einstein tensor of the conformal metric, i.e. 
\be
\label{G-components}
\aligned
\Gt_{nn}
& = \Rt_{nn},
\qquad
\Gt_{ll}
 = \Rt_{ll},
\qquad
\Gt_{nl}
= \Rt_{\zeta_1\zeta_1}= \Rt_{\zeta_2\zeta_2},
\qquad 
\Gt_{\zeta_1\zeta_1}
= \Gt_{\zeta_2\zeta_2} = \Rt_{nl}, 
\endaligned
\ee
which are given by \eqref{eq:JDJK9}.  



\section{ Proof of two technical lemmas} 
\label{append-twolemmas}

\begin{proof}[Proof of Lemma~\ref{sec 6 lem R_g apriori}] 
\bse
Recall that the scalar curvature of the metric $\gt$ is denoted by $\widetilde R$. From \eqref{R-conformal-transf}, the scalar curvature of $g$ can be expressed in terms of $\widetilde R$
and the terms $e^{\kappa\rhoh},\Box_{\gt}\rhoh$ and $\gt(\nablat\rhoh,\nablat\rhoh)$, which are well defined and bounded on $[0,u_0]\times[0,+ \infty)$.  So, our strategy is to check that $\widetilde R$ is well defined when $r\neq 0$. The expression of $\widetilde R$ in terms of $\nu$ and $\lambda$ is given by \eqref{eq:Rtilde}, which can be rewritten as
\bel{sec 6 eq_expression Rt_g}
e^{\nu+\lambda}\widetilde R = 2\del_{ur}(\nu+\lambda) - \frac{4}{r}\del_re^{\nu- \lambda} - 2e^{\nu- \lambda}\del_{rr}\nu - 2e^{\nu- \lambda}\del_r\nu\del_r(\nu- \lambda) + \frac{2(e^{\nu+\lambda} - e^{\nu- \lambda})}{r^2}.
\ee
Hence, we need to check that $\del_{rr}(\nu+\lambda),\,\del_{ur}(\nu+\lambda)$ and $r\del_{rr}(\nu- \lambda)$ can be continuously extended on $[0,u_0]\times [0,\infty)$.

We begin with $\del_{rr}(\nu+\lambda)$: Recalling that $\del_r(r\phi)$ is of class $C^1$ on $[0,u_0]\times[0,\infty)$, we have
\be
r\del_{rr}\phi = \del_{rr}(r\phi) - 2\del_r\phi.
\ee
So, $r\del_{rr}\phi \in C\big([0,u_0]\times[0,\infty)\big)$ and, using similar arguments, we get that $r\del_{rr}\rhoh \in C\big([0,u_0]\times[0,\infty)\big)$.
Since the term on the right-hand side of the first equation in \eqref{Bondi system diff augmented1.5} is differentiable with respect to $r$ on $[0,u_0]\times(0,\infty)$, then $\del_r(\nu+\lambda)$ is differentiable with respect to $r$ on $[0,u_0]\times(0,\infty)$ and  
\be
\del_{rr}(\nu+\lambda) = \frac{3}{4}\kappa^{2}\Big(|\del_r\rhoh|^2 + 2r\del_r\rhoh\del_{rr}\rhoh\Big) + \frac{4\pi}{e^{\kappa\rhoh}}\Big(|\del_r\phi|^2- \kappa r\del_r\rhoh|\del_r\phi|^2 + 2r\del_r\phi\del_{rr}\phi\Big),
\ee
so the aforementioned term can be continuously extended on $[0,u_0]\times[0,\infty)$, as claimed.
\ese

\bse
Regarding the term $r\del_{rr}(\nu- \lambda)$, observe that from the second equation of \eqref{Bondi system diff augmented1.5}, we obtain
\be
r\del_{rr}(\nu- \lambda) = \del_r\Big[\Big(1-r^2e^{-2\kappa\rhoh}\big(\Vstar(\rhoh)+8\pi U(\phi)\big)\Big)e^{2\lambda}\Big]- \del_r(\nu- \lambda) \in C\big([0,u_0]\times[0,\infty)\big).
\ee
For the term $\del_{ur}(\nu+\lambda)$, we need the following identities involving the mixed derivatives of $\rhoh$ and $\phi$:
\be
\aligned
r\del_{ur}\phi = D(\del_r(r\phi)) + \frac{1}{2}e^{\nu- \lambda}\del_{rr}(r\phi) - \del_u\phi \in C\big([0,u_0]\times[0,\infty)\big),
\\
r\del_{ur}\rhoh = D(\del_r(r\rhoh)) + \frac{1}{2}e^{\nu- \lambda}\del_{rr}(r\rhoh) - \del_u\rhoh \in C\big([0,u_0]\times[0,\infty)\big).
\endaligned
\ee
{  These identities follow directly from the definition $D=\partial_u-\frac12 e^{\nu-\lambda}\partial_r$ and the commutation of mixed partial derivatives: for instance, $\partial_u\partial_r(r\phi)=D(\partial_r(r\phi))+\frac12 e^{\nu-\lambda}\partial_{rr}(r\phi)$.}
Then, we have 
\be
\del_{ur}(\nu+\lambda) = \frac{3}{2}\kappa^{2}r\del_r\rhoh\del_{ur}\rhoh - \frac{4\pi r}{e^{\kappa\rhoh}}\Big(\kappa\del_u\rhoh|\del_r\phi|^2 + 2\del_r\phi\del_{ur}\phi\Big) \in C\big([0,u_0]\times[0,\infty)\big),
\ee
which completes the proof. 
\ese
\end{proof}


\begin{proof}[Proof of Lemma \ref{sec 6 lem R_g main estimate}]
\bse
 \noindent\emph{Step 1 (Divergence of the stress-energy tensor).}
As before, the stress-energy  tensor  reads
$T_{\alpha\beta} = \del_{\alpha}\phi\del_{\beta}\phi - \left(\frac{1}{2}\sigma+U(\phi)\right)g_{\alpha\beta}$, but now with 
\be
\sigma=e^{\kappa\rhoh}\left(-2e^{\nu- \lambda}\del_{u}\phi\del_{r}\phi+e^{-2\lambda}(\del_{r}\phi)^{2}\right),
\ee
 so that its
non-vanishing components (already computed in \eqref{eq Bondi components T}) are 
\bel{eq Bondi components T with hat}
\aligned
& T_{nn} = \big(e^{- \nu}\del_u \phi - \frac{1}{2}e^{- \lambda}\del_r\phi\big)^2,
\qquad \qquad
T_{ll} = e^{-2\lambda}(\del_r\phi)^2,
\\
& T_{\zeta_1\zeta_1} = T_{\zeta_2\zeta_2} = - e^{- \kappa\rhoh}\left(\frac{\sigma}{2}+U(\phi)\right),
\quad~ T_{nl}=T_{ln}=e^{- \kappa\rhoh}U(\phi),
\endaligned
\ee
with the augmented variable $\rhoh$ now appearing in some of the components.

Observe that, by the conformal transformation  
\be
\label{block-g}
\square_g \phi=e^{\kappa \rhoh}  \square_{\widetilde g}\phi- \kappa g(\nabla \rhoh,\nabla \phi)
= e^{\kappa \rhoh} \Big( \square_{\widetilde g}\phi+\kappa e^{- \nu- \lambda}(\del_r\phi\, D\rhoh+\del_r\rhoh\, D\phi)\Big)
\ee
together with \eqref{Bondi d'Alembert} and \eqref{eq:metricrad}, we obtain
\be
\aligned
\Box_g \phi & =   
- \frac{2}{r}e^{\kappa\rhoh}e^{- \nu- \lambda}\Big(D(\del_r(r\phi)) -  {r \over 2} \del_r(e^{\nu- \lambda})\del_r\phi - \frac{1}{2}\kappa r(D\rhoh\del_r\phi + D\phi\del_r\rhoh)\Big).
\endaligned
\ee
Then, recalling that $\nabla^{\alpha}T_{\alpha\beta} = \del_{\beta}\phi \left(\Box_g\phi-U'(\phi)\right)$ and comparing the expressions above with the third equation of the augmented system \eqref{Bondi system diff augmented1.5}, we conclude that
$\Box_g \phi = U'(\phi)$, so that $\nabla^{\alpha}T_{\alpha\beta} = 0$.
\ese
%


\bse
\noindent\emph{Step 2 (Components $(n,l)$ and $(l,l)$ of the augmented system).}
In view of \eqref{eq Bondi components T with hat}, the first two equations in \eqref{Bondi system diff augmented1.5} and the components \eqref{G-components}, we find
\bel{Bondi augmentation1 proof 1}
\aligned
& \Gt_{ll} = \frac{3}{2}\kappa^{2}e^{-2\lambda}|\del_r\rhoh|^2 + 8\pi e^{- \kappa\rhoh}T_{ll} ,
\qquad
\quad 
\Gt_{nl} = e^{-2\kappa\rhoh}\big(\Vstar(\rhoh)+8\pi U(\phi)\big).
\endaligned
\ee
Now, let us introduce the following tensor of ``augmented modified gravity''
\bel{Bondi augmentation1 proof tenser E}
\aligned
E^{\rhoh}_{\alpha\beta}
:& =  e^{\kappa\rhoh}R_{\alpha\beta} - \frac{1}{2}f(\Rstar(\rhoh))g_{\alpha\beta} +\big(g_{\alpha\beta}\Box_g - \nabla_{\alpha}\nabla_{\beta}\big)e^{\kappa\rhoh}
\\
& =  e^{\kappa\rhoh}G_{\alpha\beta} + e^{- \kappa\rhoh}\Vstar(\rhoh)\gt_{\alpha\beta}
+ \big(g_{\alpha\beta}\Box_g - \nabla_{\alpha}\nabla_{\beta}\big)e^{\kappa\rhoh} .
\endaligned
\ee\label{Bondi augmentation1 proof 1.5}
By a conformal transformation, the tensor $E^{\rhoh}_{\alpha\beta}$ can also be expressed in terms of $\gt$, namely 
\bel{Bondi augmentation1 proof 2} 
E^{\rhoh}_{\alpha\beta} = e^{\kappa\rhoh}\Gt_{\alpha\beta} - \frac{3}{2}\kappa^{2}e^{\kappa\rhoh}\del_{\alpha}\rhoh\del_{\beta}\rhoh + \frac{3}{4}\kappa^{2}\gt_{\alpha\beta}e^{\kappa\rhoh}\gt(\nablat\rhoh,\nablat\rhoh) + \gt_{\alpha\beta}e^{- \kappa\rhoh}\Vstar(\rhoh).
\ee
We include also the matter content and, as before, in agreement with the notation of \eqref{eq:939k}, we introduce the tensor
\bel{tensor F}
F^{\rhoh}_{\alpha\beta} := E^{\rhoh}_{\alpha\beta} - 8\pi T_{\alpha\beta}.
\ee
Now, by \eqref{eq Bondi components T with hat}, \eqref{G-components} and  \eqref{Bondi augmentation1 proof 2}, the non-vanishing frame components of $F^{\rhoh}$ are 
$F^{\rhoh}_{nn}$,
$F^{\rhoh}_{nl}$, 
$F^{\rhoh}_{ll},$
and 
$F^{\rhoh}_{\zeta_1\zeta_1} = F^{\rhoh}_{\zeta_2\zeta_2}$.
Combining \eqref{Bondi augmentation1 proof 1} with \eqref{Bondi augmentation1 proof 2}, we obtain easily that
\bel{Bondi augmentation1 proof 3}
F^{\rhoh}_{nl}\equiv 0,
\qquad
F^{\rhoh}_{ll} \equiv 0,
\ee
so these components provide no new information.
\ese
%


\vskip.3cm

\bse
\noindent\emph{Step 3 (A useful identity).} We claim that
the tensor $E_{\alpha\beta}^{\rhoh}\omega^{\alpha}\otimes\omega^{\beta}$, defined in \eqref{Bondi augmentation1 proof tenser E}, satisfies
\bel{Bondi augmentation1 proof 4}
\nabla^{\alpha}E^{\rhoh}_{\alpha\beta} = \frac{1}{2}e^{\kappa\rhoh}\del_{\beta}(R - \Rstar(\rhoh)),
\ee
which is an essential identity for our purposes. This can be checked by a direct calculation based on the identities derived in the proof of Lemma 2.3.1 of LeFloch and Ma~\cite{LeFlochMa17a}, which leads us to
\be
\aligned
\nabla^{\alpha}\big(e^{\kappa\rhoh} R_{\alpha\beta}\big) & =e^{\kappa\rhoh}\nabla^{\alpha}R_{\alpha\beta} + \nabla^{\alpha}\big(e^{\kappa\rhoh}\big) R_{\alpha\beta}
 = \frac{1}{2}e^{\kappa\rhoh}\nabla_{\beta}R + \nabla^{\alpha}\big(e^{\kappa\rhoh}\big)R_{\alpha\beta}
\endaligned
\ee
and
\be
\aligned
\nabla^{\alpha}\big(f(\Rstar(\rhoh))g_{\alpha\beta}\big) & = \nabla_{\beta}\big(f(\Rstar(\rhoh))\big)
  = f'(\Rstar(\rhoh))\nabla_{\beta}\big(\Rstar(\rhoh)\big)
 =e^{\kappa\rhoh}\nabla_{\beta}\Rstar(\rhoh),
\endaligned
\ee
while  
\be
\nabla^{\alpha}\big((g_{\alpha\beta}\Box_g - \nabla_{\alpha}\nabla_{\beta})e^{\kappa\rhoh}\big)
=  \big(\Box_g\nabla_{\beta}- \nabla_{\beta}\Box_g\big)e^{\kappa\rhoh}
  -R_{\alpha\beta}\nabla^{\alpha}\big(e^{\kappa\rhoh}\big).
\ee
In view of \eqref{Bondi augmentation1 proof tenser E},
we can thus combine the identities above and get the desired result \eqref{Bondi augmentation1 proof 4}.
\ese
%


\vskip.15cm

\bse
\noindent\emph{Step 4 (Components of the augmented system).}
We will now make use of the identity  \eqref{Bondi augmentation1 proof 4} in order to determine $F^{\rhoh}_{\zeta_1\zeta_1}$ and $F^{\rhoh}_{\zeta_2\zeta_2}$. Since we checked 
$\nabla^{\alpha}T_{\alpha\beta} = 0$,  we also have 
\bel{identity}
\nabla^{\alpha}F^{\rhoh}_{\alpha\beta} = \nabla^{\alpha}E^{\rhoh}_{\alpha\beta} = \frac{1}{2}e^{\kappa\rhoh}\del_{\beta}(R - \Rstar(\rhoh)).
\ee
Similarly to the proof of Proposition \ref{prop essential f(R) Bondi}, we use 
\be
\nabla_{\alpha}F^{\rhoh}_{\beta\gamma} = \langle \alpha, F^{\rhoh}_{\beta\gamma}\rangle - \Gamma_{\alpha\beta}^{\delta}F^{\rhoh}_{\delta\gamma} - \Gamma_{\alpha\gamma}^{\delta}F^{\rhoh}_{\beta\delta},
\ee
so that, after taking \eqref{Bondi augmentation1 proof 3} into account, we get
\be
\nabla^{\alpha}F^{\rhoh}_{\alpha l} = - \Gamma_{\zeta_1 l}^{\zeta_1}F^{\rhoh}_{\zeta_1\zeta_1}
- \Gamma_{\zeta_2 l}^{\zeta_2}F^{\rhoh}_{\zeta_2\zeta_2}
\ee
or, equivalently, 
\bel{Bondi augmentation1 proof 5}
 \Big(\frac{\kappa}{2}\del_r\rhoh - \frac{1}{r}\Big) F^{\rhoh}_{\zeta_1\zeta_1}= \frac{1}{4}e^{\kappa\rhoh}\del_r(R - \Rstar(\rhoh)).
\ee
 \ese

\vskip.3cm


\bse
\noindent\emph{Step 5 (Conclusion based on the trace).} We now compute the trace $g^{\alpha\beta}F^{\rhoh}_{\alpha\beta}$ with the help of 
\eqref{Bondi augmentation1 proof tenser E}, namely
\be
\aligned
g^{\alpha\beta} F^{\rhoh}_{\alpha\beta} & = g^{\alpha\beta}E^{\rhoh}_{\alpha\beta} - 8\pi g^{\alpha\beta}T_{\alpha\beta}
\\
& =g^{\alpha\beta}\big(e^{\kappa\rhoh}R_{\alpha\beta} - \frac{1}{2}g_{\alpha\beta}f(R_\star(\rhoh)) + (g_{\alpha\beta}\Box_g - \nabla_{\alpha}\nabla_{\beta})e^{\kappa\rhoh}\big) + 8\pi \sigma+32\pi U(\phi).
\endaligned
\ee
Then, using the conformal transformations \eqref{block-g} and 
\be
\label{box-exp}
\Box_{\gt} e^{\kappa\rho} = \kappa e^{\kappa\rho}\Big(\Box_{\gt}\rho+\kappa\gt (\nablat\rho,\nablat\rho)\Big)
\ee
we obtain
\be
\aligned
g^{\alpha\beta} F^{\rhoh}_{\alpha\beta}  
& = e^{\kappa\rhoh}R - 2f(\Rstar(\rhoh)) + 3\kappa e^{2\kappa\rhoh}\Box_{\gt}\rhoh + 8\pi \sigma + 32\pi U(\phi).
\endaligned
\ee
Recall that, by \eqref{box-tilde-rho},
\be
3\kappa e^{2\kappa\rhoh}\Box_{\gt}\rhoh - 2f(\Rstar(\rhoh)) + \Rstar(\rhoh)e^{\kappa\rhoh} + 8\pi \sigma+32\pi U(\phi) = 0,
\ee
then
\bel{Bondi augmentation1 proof 6}
g^{\alpha\beta} F^{\rhoh}_{\alpha\beta} = e^{\kappa\rhoh}(R - \Rstar(\rhoh)).
\ee
\ese

\bse
On the other hand, we have
\be
e^{- \kappa\rhoh}g^{\alpha\beta}F^{\rhoh}_{\alpha\beta} = \gt^{\alpha\beta}F^{\rhoh}_{\alpha\beta} = -F^{\rhoh}_{nl} - F^{\rhoh}_{ln} + F^{\rhoh}_{\zeta_1\zeta_1} + F^{\rhoh}_{\zeta_2\zeta_2}.
\ee
So, substituting \eqref{Bondi augmentation1 proof 3} and \eqref{Bondi augmentation1 proof 6} into this identity yields
\be
2F^{\rhoh}_{\zeta_1\zeta_1} = R - \Rstar(\rhoh), 
\ee
which, by \eqref{Bondi augmentation1 proof 5}, leads to 
\be
{ {\del_r(R- \Rstar(\rhoh)) = \Big(\kappa\del_r\rhoh - \frac{2}{r}\Big)e^{- \kappa\rhoh}\, (R- \Rstar(\rhoh)).}}
\ee
Finally, integrating in the radial direction gives
\be
\del_r\Big(\ln(R-\Rstar(\rhoh)) + e^{-\kappa\rhoh}\Big)
= -\frac{2}{r}e^{-\kappa\rhoh},
\ee
and therefore, for any $r_0>0$,
\be
\aligned
& (R-\Rstar(\rhoh))(u,r)\exp\!\big(e^{-\kappa\rhoh(u,r)}\big)
\\
& =
(R-\Rstar(\rhoh))(u,r_0)\exp\!\big(e^{-\kappa\rhoh(u,r_0)}\big)
\exp\!\Big(-\int_{r_0}^r\frac{2}{s}e^{-\kappa\rhoh(u,s)}\,ds\Big), 
\endaligned
\ee
equivalently
\be
S(u,r)=S(u,r_0)\exp\!\Big(-\int_{r_0}^r\frac{2}{s}e^{-\kappa\rhoh(u,s)}\,ds\Big),
\ee
with $S(u,r):=(R-\Rstar(\rhoh))(u,r)\exp(e^{-\kappa\rhoh(u,r)})$. This completes the proof of Lemma \ref{sec 6 lem R_g main estimate}.
\ese
\end{proof}

\end{document}